\theoremstyle{plain}
\newtheorem{theorem}{Theorem}
\newtheorem{lemma}{Lemma}
\newtheorem{corollary}{Corollary}
\newtheorem{proposition}{Proposition}
\theoremstyle{definition}
\newtheorem{definition}{Definition}
\newtheorem{example}{Example}
\newtheorem{remark}{Remark}
\newcommand{\C}{{\mathcal C}}
\renewcommand{\S}{{\mathcal S}}
\newcommand{\Sjjs}{{\mathcal S}_{j}}
\newcommand{\Sjsjs}{{\mathcal S}_{j^*}}
\newcommand{\Saas}{{\mathcal S}_{\bm{\alpha}}}
\newcommand{\Sasas}{{\mathcal S}_{\bm{\alpha}^*}}
\newcommand{\ba}{{\boldsymbol a}}
\newcommand{\bu}{{\boldsymbol u}}
\newcommand{\bv}{{\boldsymbol v}}
\newcommand{\bw}{{\boldsymbol{w}}}
\newcommand{\bbo}{\boldsymbol{\beta}_1}
\newcommand{\bbt}{\boldsymbol{\beta}_2}
\newcommand{\bbl}{\boldsymbol{\beta}_{\ell}}
\newcommand{\bbi}{\boldsymbol{\beta}_i}
\newcommand{\bbm}{\boldsymbol{\beta}_m}
\newcommand{\bbmpo}{\boldsymbol{\beta}_{m+1}}
\newcommand{\bgao}{\boldsymbol{\gamma}_1}
\newcommand{\bgatw}{\boldsymbol{\gamma}_2}
\newcommand{\bgath}{\boldsymbol{\gamma}_3}
\newcommand{\bgaf}{\boldsymbol{\gamma}_4}
\newcommand{\bgafv}{\boldsymbol{\gamma}_5}
\newcommand{\bgas}{\boldsymbol{\gamma}_6}
\newcommand{\bgasv}{\boldsymbol{\gamma}_7}
\newcommand{\bgae}{\boldsymbol{\gamma}_8}
\newcommand{\bgal}{\boldsymbol{\gamma}_{\ell}}
\newcommand{\bgai}{\boldsymbol{\gamma}_i}
\newcommand{\bgaj}{\boldsymbol{\gamma}_j}
\newcommand{\bgam}{\boldsymbol{\gamma}_m}
\newcommand{\bgat}{\boldsymbol{\gamma}_t}
\newcommand{\bgatpo}{\boldsymbol{\gamma}_{t+1}}
\newcommand{\bgampo}{\boldsymbol{\gamma}_{m+1}}
\newcommand{\bc}{{\boldsymbol c}}
\newcommand{\bcv}{{\vec{\boldsymbol c}}}
\newcommand{\bcj}{{\boldsymbol c}_j}
\newcommand{\bcjs}{{\boldsymbol c}_{j^*}}
\newcommand{\bco}{{\boldsymbol c}_1}
\newcommand{\bcn}{{\boldsymbol c}_n}
\newcommand{\bct}{{\boldsymbol c}_2}
\newcommand{\bg}{{\boldsymbol{g}}}
\newcommand{\bgv}{\vec{{\boldsymbol{g}}}}
\newcommand{\bgo}{{\vec{\boldsymbol{g}}^{(1)}}}
\newcommand{\bgt}{{\vec{\boldsymbol{g}}^{(2)}}}
\newcommand{\bgth}{{\vec{\boldsymbol{g}}^{(3)}}}
\newcommand{\bgi}{{\vec{\boldsymbol{g}}^{(i)}}}
\newcommand{\bgl}{{\vec{\boldsymbol{g}}^{(\ell)}}}
\newcommand{\goj}{{\boldsymbol{g}^{(1)}_j}}
\newcommand{\gio}{{\boldsymbol{g}^{(i)}_1}}
\newcommand{\gijs}{{\boldsymbol{g}^{(i)}_{j^*}}}
\newcommand{\gij}{{\boldsymbol{g}^{(i)}_j}}
\newcommand{\gin}{{\boldsymbol{g}^{(i)}_n}}
\newcommand{\glj}{{\boldsymbol{g}^{(\ell)}_j}}
\newcommand{\bbF}{{\mathbb F}}
\newcommand{\fte}{{\mathbb F}_{2^8}}
\newcommand{\ftst}{{\mathbb F}_{2^{16}}}
\newcommand{\bbZ}{{\mathbb Z}}
\newcommand{\ff}{\mathbb{F}}
\newcommand{\ft}{\mathbb{F}_2}
\newcommand{\fq}{\mathbb{F}_q}
\newcommand{\fql}{\mathbb{F}_{q^{\ell}}}
\newcommand{\fqs}{\mathbb{F}_{q^s}}
\newcommand{\fqss}{\mathbb{F}_{q^s}^*}
\newcommand{\fqlt}{\mathbb{F}_{q^{\ell/2}}}
\newcommand{\fqm}{\mathbb{F}_{q^m}}
\newcommand{\fqt}{\mathbb{F}_{q^t}}
\newcommand{\fqmpo}{\mathbb{F}_{q^{m+1}}}
\newcommand{\si}{\sigma}
\newcommand{\bal}{\bm{\alpha}}
\newcommand{\bals}{\bm{\alpha}^*}
\newcommand{\balb}{\overline{\bm{\alpha}}}
\newcommand{\bbe}{\bm{\beta}}
\newcommand{\bga}{\bm{\gamma}}
\newcommand{\bla}{\bm{\lambda}}
\newcommand{\bxi}{\bm{\xi}}
\newcommand{\bzt}{\bm{\zeta}}
\newcommand{\tr}{\mathsf{Tr}}
\newcommand{\trqlqlt}{{\mathsf{Tr}}_{\bbF_{q^\ell}/\bbF_{q^{\ell/2}}}}
\newcommand{\spn}{\mathsf{span}_{\fq}}
\newcommand{\spnt}{\mathsf{span}_{\ft}}
\newcommand{\rank}{\mathsf{rank}_{\fq}}
\newcommand{\rankt}{\mathsf{rank}_{\bbF_2}}
\newcommand{\rankq}{\mathsf{rank}_{\fq}}
\newcommand{\im}{{\sf{im}}}
\newcommand{\define}{\stackrel{\mbox{\tiny $\triangle$}}{=}}
\newcommand{\et}{{\emph{et al.}}}
\newcommand{\Cd}{\mathcal{C}^\perp}
\newcommand{\rsk}{\text{RS}(A,k)}
\newcommand{\grskl}{\text{GRS}(A,k,\vec{\boldsymbol{\lambda}})}
\newcommand{\grsnkl}{\text{GRS}(A,n-k,\vec{\boldsymbol{\lambda}})}
\newcommand{\fa}{f(\boldsymbol{\alpha})}
\newcommand{\faj}{f(\boldsymbol{\alpha}_j)}
\newcommand{\fas}{f(\boldsymbol{\alpha}^*)}
\newcommand{\fab}{f(\overline{\bm{\alpha}})}
\newcommand{\ga}{g(\boldsymbol{\alpha})}
\newcommand{\gaj}{g(\boldsymbol{\alpha}_j)}
\newcommand{\gas}{g(\boldsymbol{\alpha}^*)}
\newcommand{\gi}{g_i}
\newcommand{\gix}{g_i(x)}
\newcommand{\gia}{g_i(\boldsymbol{\alpha})}
\newcommand{\gias}{g_i(\boldsymbol{\alpha}^*)}
\newcommand{\giab}{g_i(\overline{\boldsymbol{\alpha}})}
\newcommand{\gox}{g_1(x)}
\newcommand{\goa}{g_1(\boldsymbol{\alpha})}
\newcommand{\goas}{g_1(\boldsymbol{\alpha}^*)}
\newcommand{\glx}{g_{\ell}(x)}
\newcommand{\gla}{g_{\ell}(\boldsymbol{\alpha})}
\newcommand{\glas}{g_{\ell}(\boldsymbol{\alpha}^*)}
\newcommand{\hi}{h_i}
\newcommand{\hix}{h_i(x)}
\newcommand{\hia}{h_i(\boldsymbol{\alpha})}
\newcommand{\hias}{h_i(\boldsymbol{\alpha}^*)}
\newcommand{\hiab}{h_i(\overline{\boldsymbol{\alpha}})}
\newcommand{\tw}{\tau_W}
\begin{document}

\title{Repairing Reed-Solomon Codes\\ via Subspace Polynomials
\thanks{Part of this work was done when Hoang Dau was with the Coordinated Science Laboratory, University of Illinois at Urbana-Champaign.} 
\thanks{This work was presented in part at the IEEE International Symposium on Information Theory, Aachen, Germany, 2017~\cite{DauMilenkovic2017}.} 
\thanks{
H. Dau is with the Discipline of Computer Science \& Information Technology, School of Science, RMIT University, Australia. Email: sonhoang.dau@rmit.edu.au.
D. T. Xinh is with the Department of Mathematics, Faculty of Natural Science and Technology, Tay Nguyen University, Vietnam. Email: dinhthixinh@ttn.edu.vn.
Han Mao Kiah is with the School of Physical and Mathematical Sciences, Nanyang Technological University, Singapore. Email: hmkiah@ntu.edu.sg.
Tran Thi Luong is with the Department of Information Security, Academy of Cryptographic Technique, Hanoi, Vietnam. Email: luongtranhong@gmail.com. O. Milenkovic is with the Coordinated Science Laboratory, University of Illinois at Urbana-Champaign, USA. Email: milenkov@illinois.edu.}}

\author{Hoang Dau, \emph{Member}, \emph{IEEE}, Dinh Thi Xinh, Han Mao Kiah, \emph{Member}, \emph{IEEE}, Tran Thi Luong, and Olgica Milenkovic, \emph{Fellow}, \emph{IEEE}}

\date{}
\maketitle

\begin{abstract}
We propose new repair schemes for Reed-Solomon codes that use subspace polynomials and hence generalize previous works in the literature that employ trace polynomials. The Reed-Solomon codes are over $\fql$ and have redundancy $r = n-k \geq q^m$, $1\leq m\leq \ell$, where $n$ and $k$ are the code length and dimension, respectively. In particular, for one erasure, we show that our schemes can achieve optimal repair bandwidths whenever $n=q^\ell$ and $r = q^m,$ for all $1 \leq m \leq \ell$.
For two erasures, our schemes use the same bandwidth per erasure as the single erasure schemes, for $\ell/m$ is a power of $q$, and for 
 $\ell=q^a$, $m=q^b-1>1$ ($a \geq b \geq 1$), and for $m\geq \ell/2$ when $\ell$ is even and $q$ is a power of two. 
\end{abstract}

\section{Introduction}
\label{sec:intro}

The \emph{repair bandwidth} is a crucial performance metric of erasure codes when deployed in distributed storage systems~\cite{Dimakis_etal2007, Dimakis_etal2010}.
In such systems, for an underlying finite field $\ff$, e.g. $\ff = \text{GF}(256)$, a data vector in $\ff^k$ is transformed into a codeword vector in $\ff^n$, whose components are 
subsequently stored at different storage nodes. When a node fails, the codedword symbol stored at that node is erased (lost). 
A replacement node (RN) has to recover the content stored at the failed node by downloading relevant information from the remaining operational nodes. 
The repair bandwidth refers to the total amount of information (in bits) that the RN has to download in order to complete the repair process.
If multiple erasures occur, different RNs may also exchange information in a \textit{distributed} manner, and we are interested in the bandwidth used \textit{per erasure}. Alternatively, multiple erasures can be recovered by a \textit{centralized} entity, which, however, is not the focus of this work. 

Reed-Solomon codes~\cite{ReedSolomon1960}, the most practically used maximum distance separable codes~\cite{MW_S}, have been deployed in major distributed storage systems such as the Google File System II, Quantcast File System, Yahoo Object Store, Facebook f4 Storage System, Baidu Atlas Cloud Storage, Backblaze Vaults, and HDFS (see~\cite[Table I]{DauDuursmaKiahMilenkovic2018}). However, they perform poorly as erasure codes under the repair bandwidth metric. For instance, to repair a data chunk of size $256$ MB, the default repair scheme for the Reed-Solomon code $(14,10)$ employed by Facebook's f4~\cite{FBf42014} implementation requires a repair bandwidth of $2.56$ GB. As observed earlier in~\cite{Sathiamoorthy-etal-2013}, the bandwidth used for repairing Reed-Solomon coded data in a Facebook analytics cluster amounts to $10$\%-$20$\% of the total network traffic within the cluster.  

There has been a considerable effort by the research community to improve and optimize the repair bandwidth of Reed-Solomon codes~\cite{Shanmugam2014, GuruswamiWootters2016, GuruswamiWootters2017, YeBarg_ISIT2016, YeBarg_TIT2017, DuursmaDau2017, TamoYeBarg2017, TamoYeBarg2018}.
Several extensions to the case of multiple erasures were also studied~\cite{TamoYeBarg2018,DauDuursmaKiahMilenkovic2018, DauDuursmaKiahMilenkovicTwoErasures2017, BartanWootters2017, MardiaBartanWootters2018, ZhangZhang_ISCIT_2019}. 
The optimal repair bandwidth of Reed-Solomon codes  is generally unknown, except for some full-length codes~\cite{GuruswamiWootters2016, GuruswamiWootters2017, DauMilenkovic2017} and for codes with exponentially large subpacketizations~\cite{TamoYeBarg2017, TamoYeBarg2018}. Constructions of Reed-Solomon codes and repair schemes that trade-off the repair bandwidth and subpacketization size were also investigated in~\cite{ChowdhuryVardy2017, LiWangJafarkhani-Allerton-2017, li2019tradeoff}.
Another line of relevant research has focused on the \textit{I/O cost} (the number of bits \textit{accessed} at helper nodes) of repairing Reed-Solomon codes~\cite{DauDuursmaChu-ISIT-2018, DauViterbo-ITW-2018, LiDauWangJafarkhaniViterbo_ISIT2019}. 
%
%

The focus of this work is on constructions of repair schemes for Reed-Solomon codes over $\fql$ with redundancy $r\geq q^m$ using \textit{subspace polynomials} in $\fql[x]$ whose root sets form $m$-dimensional $\fq$-subspaces of $\fql$ (treated as a vector space over~$\fq$).  
For a \textit{single} erasure, we show that the proposed repair scheme uses a repair bandwidth of $(n-1)(\ell-m)\log_2 q$ bits, which is optimal when $n = q^\ell$ and $r = q^m$ \textit{for all} $1 \leq m\leq \ell$, based on a newly derived lower bound that (slightly) improves upon that in~\cite{GuruswamiWootters2016, GuruswamiWootters2017}. Our scheme generalizes the method introduced in~\cite{GuruswamiWootters2016, GuruswamiWootters2017} which employs \textit{trace polynomials} and only works when $r \geq q^m$ and $(\ell-m)$ \textit{divides} $\ell$. Note that a trace polynomial is a special subspace polynomial and the divisibility condition is imposed by the property that the set of values of the trace polynomial must lie in a \textit{subfield} of $\fql$. This constraint is relaxed in our construction as only \textit{subspaces} are required. 
 
We also develop \textit{distributed} schemes that repair two erasures for Reed-Solomon codes. When $r \geq q^{\ell-1}$, it has been shown by Dau \et~\cite{DauDuursmaKiahMilenkovicTwoErasures2017, DauDuursmaKiahMilenkovic2018} and by Zhang and Zhang~\cite{ZhangZhang_ISCIT_2019} that two erasures can be repaired using a bandwidth of $(n-1)\log_2 q$ bits per erasure via trace polynomial approaches. Our goal is to consider the more general case when $r \geq q^m$, $1~\leq~m~\leq~\ell$. 
We describe constructions of several schemes repairing two erasures for Reed-Solomon codes with a bandwidth of $(n-1)(\ell-m)\log_2 q$ bits per erasure, the same bandwidth as required for a single erasure. More specifically, our constructions apply when $\ell/m$ is a power of $q$, and when $\ell=q^a$ and $m=q^b-1>1$ (for all $a \geq b \geq 1$), and when $m\geq \ell/2$ and $\ell$ is even and $q$ is a power of two. In this setting we also make use of subspace polynomials: However, while any subspace polynomial may be used for the repair scheme of a single erasure, this is not the case for two erasures. There, subspace polynomials satisfying certain additional properties are needed.\vspace{10pt} 

The remainder of the paper is organized as follows. We first provide relevant definitions and terminologies and then discuss the Guruswami-Wootters repair
scheme for Reed-Solomon codes in Section~\ref{sec:pre}. The improved
lower bound on the repair bandwidth of a single erasure is presented in Section~\ref{sec:lower_bound}. We introduce new repair schemes for a single erasure and for two erasures in Section~\ref{sec:single_erasure} and Section~\ref{sec:TwoErasureLinearizedPoly}, respectively. We provide concluding remarks in Section~\ref{sec:conclusion}.

\section{Preliminaries}
\label{sec:pre}

\subsection{Definitions and Notations}
Let $[n]$ denote the set $\{1,2,\ldots,n\}$ and $[m,n]$ the set $\{m,m+1,\ldots,n\}$. Let $\fq$ be the finite field of $q$ elements, for some prime power $q$. Let $\fql$ be an extension field of $\fq$, where $\ell \geq 1$.
We refer to the elements of $\fql$ as \emph{symbols} and the elements of $\fq$ as \emph{subsymbols}. The field $\fql$ may also be viewed as a vector space of dimension $\ell$ over $\fq$, i.e. $\fql \cong \fq^\ell$, and hence each symbol in $\fql$ can be represented as a vector of length $\ell$ over $\fq$. We use $\spn(U)$ to denote the $\fq$-subspace of $\fql$ spanned by a set of elements $U$ of $\fql$. 
We use $\dim_{\fq}(\cdot)$ and $\rankq(\cdot)$ to denote the dimension of a subspace and the rank of a set of vectors over $\fq$, respectively.
The (field) trace of any symbol $\bal \in \fql$ over $\fq$ is defined as 
$\mathsf{Tr}_{\fql / \fq}(\bal) = \sum_{i = 0}^{\ell-1} \bal^{q^i}$. When clear from the context, we omit the subscripts $\fql / \fq$.  

A \emph{linear $[n,k]$ code} $\C$ over $\fql$ is an $\fql$-subspace of $\fql^n$ of dimension $k$. Each element of a code is referred to as a \emph{codeword}. Each element $\bcj$ of a codeword $\vec{\bc}=(\bco,\bct,\ldots,\bcn)\in \fql^n$ is referred to as a codeword symbol. 
The \emph{dual} $\Cd$ of a code $\C$ is the orthogonal complement of $\C$ in $\fql^n$ and has dimension $r = n - k$. 

\begin{definition} 
\label{def:RS}
Let $\fql[x]$ denote the ring of polynomials over $\fql$. A Reed-Solomon code $\rsk \subseteq \fql^n$ of dimension $k$ over a finite
field $\fql$ with evaluation points $A=\{\bal_j\}_{j=1}^n \subseteq \fql$
is defined as 
\[
\rsk = \Big\{\big(f(\bal_1),\ldots,f(\bal_n)\big) \colon f \in \fql[x],\ \deg(f) < k \Big\}. 
\]
The Reed-Solomon code is \emph{full length} if $n = q^\ell$, i.e. $A \equiv \fql$. 
\end{definition}

A \emph{generalized} Reed-Solomon code, $\grskl$, where $\vec{\bla} = (\bla_1,\ldots,\bla_n)\in F^n$, is defined similarly to a Reed-Solomon code, except that the codeword
corresponding to a polynomial $f$ is defined as $\big( \bla_1f(\bal_1),\ldots,\bla_n f(\bal_n) \big)$, where $\bla_j \neq 0$ for all $j \in [n]$. 
It is well known that the dual of a Reed-Solomon code $\rsk$, for any $n \leq |F|$, is a generalized Reed-Solomon code $\grsnkl$, 
for some multiplier vector $\vec{\bla}$~(see~\cite[Chp.~10]{MW_S}). 

Whenever clear from the context, we use $f(x)$ to denote a polynomial of degree at most $k-1$, which corresponds to a codeword of the Reed-Solomon code $\C=\rsk$, and $g(x)$ to denote a polynomial of degree at most $r-1=n-k-1$, which corresponds to a codeword of the dual code $\Cd$. Since
$
\sum_{j=1}^n \gaj(\bla_j\faj) = 0, 
$
we also refer to the polynomial $g(x)$ as a check polynomial for $\C$. Note that when $n = q^\ell$, we have $\bla_j = 1$ for all $j \in [n]$. 
In general, as the column multipliers $\bla_j$ do not play any role in evaluating the repair bandwidth, they are often omitted to simplify the notation (see also Remark~\ref{rm:RS}). 

\subsection{Trace repair framework}
\label{subsec:GW}
First, note that each element of $\fql$ can be recovered from its $\ell$ \textit{independent} traces. More precisely, given a basis $\{\bbe_i\}_{i=1}^\ell$ of $\fql$ over $\fq$, any $\bal \in \fql$ can be uniquely determined given the values of $\tr(\bbe_i\,\bal)$ for $i\in [\ell]$, i.e. $\bal = \sum_{i=1}^\ell\tr(\bbe_i \bal)\bbe^\ast_i$, where $\{\bbe^\ast_i\}_{i=1}^\ell$ is the dual (trace-orthogonal) basis of $\{\bbe_i\}_{i=1}^\ell$ (see, e.g.~\cite[Ch.~2, Def.~2.30]{LidlNiederreiter1986}).

Let $\C$ be an $[n,k]$ linear code over $\fql$ and $\Cd$ its dual. 
If $\bcv = (\bco,\ldots,\bcn) \in \C$ and $\bgv = (\bg_1,\ldots,\bg_n) \in \Cd$ then $\bcv~\cdot~\bgv~= ~\sum_{j=1}^n \bcj\bg_j~=~0$. Suppose $\bcjs$ 
is erased and needs to be recovered.
In the trace repair framework, choose a set of $\ell$ dual codewords $\bgo,\ldots,\bgl$ such that $\dim_{\fq}\big(\{\gijs\}_{i=1}^\ell\big) = \ell$, where $\bgi=(\gio,\ldots,\gin)$, $i \in [\ell]$. Since the trace is a linear map, we obtain the following $\ell$ equations, referred to as the \textit{repair equations}, 
\begin{equation}
\label{eq:repair_equations} 
\tr\big(\gijs \bcjs\big) = -\sum_{j \neq j^*} \tr\big(\gij \bcj\big),\quad i \in [\ell]. 
\end{equation} 
In order to recover $\bcjs$, one needs to retrieve sufficient information from $\{\bcj\}_{j \neq j^*}$ to compute the right-hand sides of \eqref{eq:repair_equations}. 
We define, for all $j \in [n]$,
\begin{equation}
\label{eq:S}
\Sjjs \define \spn\bigg(\left\{\goj,\ldots,\glj\right\}\bigg) 
\end{equation}
and refer to $\Sjjs$ as a \emph{column space} of the repair scheme. The name reflects the fact that $\Sjjs$ is the $\fq$-subspace of $\fql$ spanned by the elements in the $j$-th \textit{column} of the table whose $i$-th row corresponds to the $n$ components of the dual codeword $\bgi$ (see Table~\ref{fig:ConstructionI} for an example).  
Note that $\Sjsjs = \fql$, or equivalently, $\dim_{\fq}(\Sjsjs)=\ell$, which guarantees that $\bcjs$ can be recovered from the $\ell$ (independent) traces $\tr\big(\gijs \bcjs\big)$ on the left-hand side of \eqref{eq:repair_equations}, referred to as the \textit{target traces}.
This is a \textit{necessary and sufficient condition} for the set of dual codewords $\{\bgi\}_{i=1}^\ell$ to form a repair scheme of $\bcjs$.  
 
We now discuss the repair bandwidth of this repair scheme. For $j \neq j^*$, to determine $\tr(\gij \bcj)$ for all $i \in [\ell]$, 
it suffices for the RN to retrieve $b_j=\dim_{\fq}(\Sjjs)$ subsymbols in $\fq$ from the node storing $\bcj$. 
Indeed, suppose $\{\bg^{i_t}_j\}_{t=1}^{b_j}$ is an $\fq$-basis of $\Sjjs$, then by 
retrieving just $b_j$ traces $\{\tr(\bg^{i_t}_j \bcj)\}_{t=1}^{b_j}$
of $\bcj$, referred to as \textit{repair traces}, 
all other traces $\{\tr(\gij \bcj)\}_{i=1}^{\ell}$ can be computed as $\fq$-linear combinations of those $b_j$ traces without any knowledge of $\bcv$. Thus, the RN must download a total of $b = \sum_{j \neq j^*}b_j= \sum_{j \neq j^*} \dim_{\fq}(\Sjjs)$ subsymbols from $\fq$.

We refer to the scheme described above as a \emph{repair scheme based on} $\{\bgi\}_{i=1}^\ell$ using the base field $\fq$. Note that $\fql$ is often referred to as the \textit{coding field}. We omit the base field when it is clear from the context. 
Note that we need one such repair scheme for each codeword symbol $\bcjs$, $j^*\in [n]$. Furthermore, we assume that every node knows all $n$ repair schemes, that is, all $n$ sets of $\ell$ dual codewords, in advance.  
It is known that these types of repair schemes include every possible linear repair scheme for Reed-Solomon codes~\cite{GuruswamiWootters2016}. 
The above discussion is summarized in Lemma~\ref{lem:GW}.  \vspace{5pt}

\begin{lemma}[Guruswami-Wootters~\cite{GuruswamiWootters2016}] 
\label{lem:GW}
Let $\C$ be an $[n,k]$ linear code over $\fql$ and $\Cd$ its dual. 
The repair scheme for $\bcjs$ based on $\ell$ dual codewords $\bgo,\ldots,\bgl$, where $\dim_{\fq}\big(\Sjsjs\big) = \ell$, incurs a repair bandwidth of $\sum_{j \neq j^*} \dim_{\fq}(\Sjjs)$ subsymbols in $\fq$, where $\Sjjs$ are column spaces defined as in \eqref{eq:S}.  
\end{lemma}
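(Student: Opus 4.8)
The plan is to establish the two facts on which the statement rests: that under the rank hypothesis the $\ell$ target traces $\tr(\gijs\bcjs)$, $i\in[\ell]$, determine the erased symbol $\bcjs$, and that for each surviving coordinate $j\neq j^*$ the node storing $\bcj$ need only transmit $\dim_{\fq}(\Sjjs)$ subsymbols of $\fq$ in order for the right-hand sides of the repair equations to be formed. First I would derive the repair equations: for each $i\in[\ell]$, since $\bcv\in\C$ and $\bgi\in\Cd$ are orthogonal, $\sum_{j=1}^n\gij\bcj=0$, hence $\gijs\bcjs=-\sum_{j\neq j^*}\gij\bcj$; applying the $\fq$-linear trace map $\tr=\trqlq$ to both sides yields exactly \eqref{eq:repair_equations}.

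Next I would argue recoverability of $\bcjs$. The hypothesis $\dim_{\fq}(\Sjsjs)=\dim_{\fq}\big(\spn\{\gojs,\ldots,\gljs\}\big)=\ell$ says precisely that $\{\gijs\}_{i=1}^\ell$ is an $\fq$-basis of $\fql$; by the trace-duality fact recalled before the lemma, if $\{(\gijs)^\ast\}_{i=1}^\ell$ denotes its dual basis then $\bcjs=\sum_{i=1}^\ell\tr(\gijs\bcjs)\,(\gijs)^\ast$, so the $\ell$ target traces and $\bcjs$ carry the same information. It then remains to show the RN can compute the right-hand side of each equation in \eqref{eq:repair_equations} from a small download. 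Fix $j\neq j^*$, put $b_j=\dim_{\fq}(\Sjjs)$, and let $\{\bg^{i_t}_j\}_{t=1}^{b_j}$ be an $\fq$-basis of $\Sjjs=\spn\{\goj,\ldots,\glj\}$. The RN downloads the $b_j$ repair traces $\tr(\bg^{i_t}_j\bcj)$, $t\in[b_j]$, from node $j$. For any $i\in[\ell]$ we have $\gij\in\Sjjs$, so $\gij=\sum_{t=1}^{b_j}\lambda^{(i)}_t\bg^{i_t}_j$ for scalars $\lambda^{(i)}_t\in\fq$ that depend only on the fixed repair scheme, not on the unknown codeword; $\fq$-linearity of the trace then gives $\tr(\gij\bcj)=\sum_{t=1}^{b_j}\lambda^{(i)}_t\tr(\bg^{i_t}_j\bcj)$. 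Hence all $\ell$ values $\tr(\gij\bcj)$, $i\in[\ell]$, are recoverable from the $b_j$ downloaded subsymbols, and summing over $j\neq j^*$ the RN forms every right-hand side $-\sum_{j\neq j^*}\tr(\gij\bcj)$, obtains the $\ell$ target traces, and reconstructs $\bcjs$, having downloaded $\sum_{j\neq j^*}\dim_{\fq}(\Sjjs)$ subsymbols in total.

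There is no substantive obstacle here: the statement consolidates the preceding discussion, and each step is a one-line linear-algebra observation. The only point worth stating carefully is that the change-of-basis coefficients $\lambda^{(i)}_t$ are independent of the stored data, which is what makes the scheme \emph{non-adaptive} and the per-node download equal to $b_j=\dim_{\fq}(\Sjjs)$ rather than $\ell$; this in turn uses the fact that the dual codewords $\bgo,\ldots,\bgl$, and hence the column spaces $\Sjjs$, are agreed upon by all nodes in advance.
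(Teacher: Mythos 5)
Your proposal is correct and follows essentially the same route as the paper, which proves Lemma~\ref{lem:GW} by consolidating the preceding discussion: orthogonality of $\bcv$ and $\bgi$ plus $\fq$-linearity of the trace gives the repair equations \eqref{eq:repair_equations}, the full-rank condition $\dim_{\fq}(\Sjsjs)=\ell$ yields $\bcjs$ via the dual basis, and each helper $j\neq j^*$ sends only a basis of traces of $\Sjjs$ from which all $\tr(\gij\bcj)$ are $\fq$-linear combinations. Your explicit remark that the coefficients $\lambda^{(i)}_t$ depend only on the pre-agreed scheme matches the paper's observation that the combinations can be formed "without any knowledge of $\bcv$."
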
 

\begin{remark}
\label{rm:RS}
When $\C=\rsk$ is a Reed-Solomon code with $A = \{\bal_j\}_{j=1}^n\in \fql^n$, its dual codewords are of the form $\bgv=(\bla_1 g(\bal_1), \bla_2 g(\bal_2),\ldots,\bla_n g(\bal_n))$, where $g(x)\in \fql[x]$ are polynomials of degrees at most $r-1=n-k-1$ and $\bla_j\in \fql^*$ are fixed column multipliers. A repair scheme for $\bcjs$ is based on $\ell$ polynomials $g_1(x),\ldots,g_\ell(x)$. Since
\[
\Sjjs = \bla_j\spn(\{g_i(\bal_j)\}_{i=1}^{\ell}),
\]
we have $\dim_{\fq}(\Sjjs) = \dim_{\fq}\spn(\{g_i(\bal_j)\}_{i=1}^{\ell})$. Therefore, the multiplier $\bla_j$ are irrelevant for determining the repair bandwidth of the repair scheme based on $g_1(x),\ldots,g_\ell(x)$. We slightly abuse the notation and henceforth ignore $\bla_j$ and referring to $\Sjjs = \spn(\{g_i(\bal_j)\}_{i=1}^{\ell})$ as the column space of the repair scheme for a Reed-Solomon code. Another way to view this simplification is that as recovering $\faj$ is equivalent to recovering $\bla_j\faj$, one can safely ignore $\bla_j$ and focus only on $g_i(x)$ in the construction of low-bandwidth repair schemes for Reed-Solomon codes.  
\end{remark}

\begin{table}[H]
\centering
\tabcolsep=0.1cm
\begin{tabular}{|l||c|c|c|c|c|c|c|c|}
\hline
& $j=1$ & $j=2$ & $j=3$ & $j=4$ & $j=5$ & $j=6$ & $j=7$ & $j=8$\\ 
\hhline{|=||=|=|=|=|=|=|=|=|}
$\bgo$ & $\mathbf{1}$ & $\cdot$ & $\bxi^3$ & $\bxi^6$ & $\bxi$ & $\bxi^5$ & $\bxi^4$ & $\bxi^2$\\
\hline
$\bgt$ & $\bxi$ & $\bxi^4$ & $1$ & $\bxi^2$ & $\bxi^6$ & $\bxi^5$ & $\bxi^3$ & $\cdot$\\
\hline
$\bgth$ & $\bxi^2$ & $\bxi$ & $\bxi^3$ & $1$ & $\bxi^6$ & $\bxi^4$ & $\cdot$ & $\bxi^5$\\
\hhline{|=||=|=|=|=|=|=|=|=|}
$\dim_{\ft}(\S_{j\to 1})$ & $\mathbf{3}$ & $2$ & $2$& $2$& $2$& $2$& $2$& $2$\\
\hline
\end{tabular}
\caption{A list of three dual codewords used to repair the first codeword symbol $\bco$ of an $[8,6]$ Reed-Solomon code over $\mathbb{F}_8$. These dual codewords must be known to all nodes in advance. Here, a dot ``$\cdot$" stands for a zero entry. It suffices for the replacement node to download two bits from each available node, for instance, $\tr(\bxi^4\bct)$ and $\tr(\bxi\bct)$ from the node storing $\bct$, or $\tr(\bxi^2\bc_4)$ and $\tr(\bc_4)$ from the node storing $\bc_4$. Thus, the scheme has a repair bandwidth of $14 = 7\times 2$ bits.}
\label{fig:ConstructionI} 
\end{table}

\begin{example}
\label{ex:trace_repair}
Consider a repair scheme of the first codeword symbol $\bco$ of an $[8,6]$ Reed-Solomon code over $\bbF_8$ that is based on $\ell=3$ dual codewords $\bg^{(1)}$, $\bg^{(2)}$, and $\bg^{(3)}$ as given in Table~\ref{fig:ConstructionI}. 
We have $q = 2$, $n = 8$, $k = 6$, and $j^*=1$ in this case. This repair scheme can be constructed using Construction~I developed in Section~\ref{sec:single_erasure} (see Example~\ref{ex:construction_I}). 
First, this set of three dual codewords corresponds to a repair scheme for $\bco$ because 
\[
\dim_{\ft}(\S_{1}) = \rankt(\{1,\bxi,\bxi^2\})=3=\ell,
\]
where $\bxi$ is a primitive element of $\bbF_8$ satisfying $\bxi^3 + \bxi + 1 = 0$. For instance, for the codeword $\bcv = (?,1,\bxi^2,\bxi^4,0,\bxi,0,0)$ where $\bco$ has been erased, the three repair equations are given below.
\[
\begin{split}
\tr(\bco) &= -\tr(\bxi^3\times\bc_3)-\tr(\bxi^6\times\bc_4)-\tr(\bxi^5\times\bc_6) = 1,\\
\tr(\bxi\bco) &= -\tr(\bxi^4\bct)\hspace{-2pt}-\hspace{-2pt}\tr(\bc_3)\hspace{-2pt}-\hspace{-2pt}\tr(\bxi^2\times\bc_4)\hspace{-2pt}-\hspace{-2pt}\tr(\bxi^5\times\bc_6)\hspace{-2pt} =\hspace{-2pt} 0,\\
\tr(\bxi^2\bco) &= -\tr(\bxi\bct)\hspace{-2pt}-\hspace{-2pt}\tr(\bxi^3\bc_3)\hspace{-2pt}-\hspace{-2pt}\tr(\bc_4)\hspace{-2pt}-\hspace{-2pt}\tr(\bxi^4\times\bc_6) = 0.
\end{split}
\]
As $\{1,\bxi^2,\bxi\}$ is a dual basis of $\{1,\bxi,\bxi^2\}$, we can recover $\bco$ as follows.
\[
\bco = \tr(\bco) + \tr(\bxi\bco)\bxi^2 + \tr(\bxi^2\bco)\bxi = 1.
\]

Next, the repair bandwidth is equal to the sum of the dimensions (over $\ft$) of the column spaces $\S_{j}$, or equivalently, the sum of ranks (over $\ft$) of the entries in columns $j$ of Table~\ref{fig:ConstructionI}, $2 \leq j \leq 8$. 
In this scheme, all column spaces $\S_{j}$, $2 \leq j \leq 8$, have dimension $b_j=2$ over $\ft$. That means the helper node storing $\bcj$, $j \neq 1$, just needs to send two repair traces (two bits) to the replacement node. For example, Node $j = 2$ sends $\tr(\bxi^4 \bct)$ and $\tr(\bxi \bct)$. Nodes $j = 4$ sends $\tr(\bxi^2 \bc_4)$ and $\tr(\bc_4)$. Node $j=4$ does not need to send $\tr(\bxi^6\bc_7)$ because this trace can be written as the sum of the previous two traces, 
\[
\tr(\bxi^6\bc_4)= \tr((\bxi^2+1)\bc_4) = \tr(\bxi^2 \bc_4)+\tr(\bc_4).
\]
Therefore, this scheme has a repair bandwidth of $14 = 7\times 2$ bits.
\end{example} 

\subsection{Definition and Properties of Linearized Polynomials}
 
A linearized polynomial $L(x)$ over $\fql$ is of the form $L(x) = \sum_{i=0}^t \ba_ix^{q^i}$, $\ba_i \in \fql$. A few properties of linearized polynomials needed for our subsequent analysis are discussed below. As $L(\bal+\bbe)=L(\bal)+L(\bbe)$ and $L(c\bal)=cL(\bal)$ for all $\bal, \bbe \in \fql$ and $c \in \fq$, $L(\cdot)$ is an $\fq$-linear map from $\fql$ to itself. For linearized polynomials over $\fq$, i.e. having coefficients in $\fq$, we define the \textit{symbolic multiplication} of $L_1(x)$ and $L_2(x)$ as $L_1\otimes L_2(x) = L_1(L_2(x))$, which is again a linearized polynomial. We use $L^{\otimes s}(x)$ to denote the $s$-th power of $L$ with respect to the symbolic multiplication. The polynomial $l(x) = \sum_{i=0}^t s_ix^i \in \fq[x]$ is called the \textit{conventional $q$-associate} (or just \textit{associate} for short) of $L(x) = \sum_{i=0}^t a_ix^{q^i}\in \fq[x]$. It is known that the associate of $L^{\otimes s}(x)$ is $l^s(x)$ (see, for instance, \cite[Lem.~3.59]{LidlNiederreiter1986}). 

\section{An Improved Lower Bound on the Repair Bandwidth of a Single Erasure}
\label{sec:lower_bound}

In order to evaluate their proposed single erasure repair scheme, Guruswami and Wootters~\cite{GuruswamiWootters2016} established a lower bound on the repair bandwidth for Reed-Solomon codes. 
We start our exposition by improving their bound. The result of this derivation also suggests the number of subsymbols that needs to be downloaded from each available node using an optimal repair scheme. Consequently, the bound allows one to perform a theoretical/numerical search for optimal repair schemes in a simple manner.  

\begin{proposition} 
\label{pro:lower_bound}
Any linear repair scheme for Reed-Solomon codes $\rsk$ over $\fql$ that uses the base field $\fq$ 
requires a bandwidth of at least \vspace{-5pt}
\[
t \lfloor b_{\text{AVE}} \rfloor + (n-1-t)\lceil b_{\text{AVE}} \rceil \vspace{-5pt}
\] 
subsymbols over $\fq$, where $n = |A| \leq q^\ell$, and where $b_{\text{AVE}}$ and $t$
are defined as 
\[
b_{\text{AVE}} \define \log_q\Big(\frac{(n-1)q^\ell}{(r-1)(q^\ell-1)+(n-1)}\Big), \vspace{-5pt}
\]
and $t \define n-1$ if $b_{\text{AVE}} \in \mathbb{Z}$, and \vspace{-5pt}
\[
t \define \left\lfloor \frac{T - (n-1)q^{-\lceil b_{\text{AVE}} \rceil}}{q^{-\lfloor b_{\text{AVE}} \rfloor} - q^{-\lceil b_{\text{AVE}} \rceil}}\right\rfloor
\vspace{-5pt}
\]
otherwise. Here,  \vspace{-5pt}
\[
T \define \frac{(r-1)(q^\ell-1)+(n-1)}{q^\ell}.
\]
\end{proposition}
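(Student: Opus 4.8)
The plan is to lower-bound $\sum_{j\neq j^*}\dim_{\fq}(\Sjjs)$ over all valid repair schemes using the counting argument of Guruswami--Wootters, but to keep the bound tight by an integrality (convexity) refinement rather than rounding at the end. First I would recall, via Lemma~\ref{lem:GW} and Remark~\ref{rm:RS}, that any linear repair scheme for $\bcjs$ is given by $\ell$ check polynomials $g_1,\dots,g_\ell$ of degree at most $r-1$ with $\dim_{\fq}\spn\big(\{g_i(\bal_{j^*})\}_{i=1}^\ell\big)=\ell$, and set $b_j=\dim_{\fq}\spn\big(\{g_i(\bal_j)\}_{i=1}^\ell\big)$, so the bandwidth is $\sum_{j\neq j^*}b_j$. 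The key inequality, which I would reprove following \cite{GuruswamiWootters2016}, comes from double counting nonzero polynomials in the $\fq$-span $W=\spn_{\fq}\{g_1,\dots,g_\ell\}$ (an $\fq$-space of dimension $\ell$, hence $q^\ell-1$ nonzero elements): each nonzero $g\in W$ has degree $\le r-1$, so vanishes at no more than $r-1$ of the $\bal_j$; equivalently it is nonzero at at least $n-(r-1)$ points. Counting pairs $(g,j)$ with $g\in W\setminus\{0\}$, $g(\bal_j)\ne 0$ from one side gives at least $(q^\ell-1)\big(n-(r-1)\big)$; from the other side, the number of $g\in W$ with $g(\bal_j)\ne 0$ is $q^\ell-q^{\ell-b_j}$ for $j\ne j^*$ (since the evaluation map $W\to\fql$, $g\mapsto g(\bal_j)$ has image of $\fq$-dimension $b_j$) and is $q^\ell-1$ for $j=j^*$ (as $b_{j^*}=\ell$). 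Rearranging yields
\[
\sum_{j\neq j^*} q^{-b_j} \;\le\; \frac{(r-1)(q^\ell-1)+(n-1)}{q^\ell} \;=\; T.
\]

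The second step is the optimization: minimize $\sum_{j\neq j^*}b_j$ over integer vectors $(b_j)_{j\neq j^*}$, each $b_j\ge 1$, subject to $\sum_{j\neq j^*}q^{-b_j}\le T$. Relaxing to real $b_j$, convexity of $x\mapsto q^{-x}$ and Jensen (or a Lagrange/exchange argument) show the real optimum is attained at the all-equal point $b_j=b_{\text{AVE}}$, where $b_{\text{AVE}}$ is exactly the value solving $(n-1)q^{-b_{\text{AVE}}}=T$, i.e. the displayed logarithm. If $b_{\text{AVE}}\in\bbZ$ this integer point is feasible and optimal, giving $(n-1)b_{\text{AVE}}=t\lfloor b_{\text{AVE}}\rfloor+(n-1-t)\lceil b_{\text{AVE}}\rceil$ with $t=n-1$. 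Otherwise I would argue the integer optimum uses only the two values $\lfloor b_{\text{AVE}}\rfloor$ and $\lceil b_{\text{AVE}}\rceil$: any feasible integer vector can be pushed toward these two levels without increasing $\sum b_j$ and without violating the constraint, because moving a coordinate from $b$ to $b-1$ and a coordinate from $b'$ to $b'+1$ (with $b\le b'$) leaves $\sum b_j$ fixed while the constraint slack $T-\sum q^{-b_j}$ changes by a quantity whose sign one controls; iterating collapses the support onto two adjacent integers. With $t$ coordinates equal to $\lfloor b_{\text{AVE}}\rfloor$ and $n-1-t$ equal to $\lceil b_{\text{AVE}}\rceil$, the constraint $t\,q^{-\lfloor b_{\text{AVE}}\rfloor}+(n-1-t)q^{-\lceil b_{\text{AVE}}\rceil}\le T$ solves for the largest admissible $t$, which is precisely the floor expression in the statement; and since decreasing $t$ increases the objective by $1$ each time, this largest $t$ gives the minimum bandwidth $t\lfloor b_{\text{AVE}}\rfloor+(n-1-t)\lceil b_{\text{AVE}}\rceil$.

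I expect the main obstacle to be the integrality step: making rigorous that among integer-feasible points the minimum of $\sum b_j$ is attained by a two-level vector supported on $\{\lfloor b_{\text{AVE}}\rfloor,\lceil b_{\text{AVE}}\rceil\}$, and that within such vectors the constraint forces $t$ to be exactly $\big\lfloor (T-(n-1)q^{-\lceil b_{\text{AVE}}\rceil})/(q^{-\lfloor b_{\text{AVE}}\rfloor}-q^{-\lceil b_{\text{AVE}}\rceil})\big\rfloor$. The rest — the double-counting inequality and the continuous Jensen argument — is routine; the care lies in the exchange argument showing no integer point with a coordinate outside $\{\lfloor b_{\text{AVE}}\rfloor,\lceil b_{\text{AVE}}\rceil\}$ can do better, which is where the improvement over the looser bound of \cite{GuruswamiWootters2016} (which essentially rounds $(n-1)b_{\text{AVE}}$ up) comes from. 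One should also note $b_j\le\ell$ automatically and that $b_{j^*}=\ell$ is excluded from the sum, so no upper-bound constraint interferes with the optimization.
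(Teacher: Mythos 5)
Your proposal is correct and is essentially the paper's own proof: the key constraint $\sum_{j\neq j^*}q^{-b_j}\le T$ is obtained by the same Guruswami--Wootters counting (your double count over the nonzero elements of $\spn_{\fq}\{g_1,\dots,g_\ell\}$ is just the paper's averaging over coefficient vectors $\vec{s}\in\fq^\ell\setminus\{\vec{0}\}$ stated in aggregate form), and the integral refinement is handled identically, via the balancing/exchange argument collapsing an optimal integer solution onto the two adjacent levels $\lfloor b_{\text{AVE}}\rfloor,\lceil b_{\text{AVE}}\rceil$ and then taking the largest feasible $t$. One minor correction: the paper optimizes over $b_j\in\{0,1,\dots,\ell\}$, so you should drop the side constraint $b_j\ge 1$, which is not forced by the linear-repair characterization and would change the answer only in the degenerate regime $b_{\text{AVE}}<1$.
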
 
\begin{proof} 
The first part of the proof proceeds along the same lines as the proof of~\cite[Thm.~6]{GuruswamiWootters2016}.
But once the optimization problem is solved to arrive at a \emph{fractional} lower bound, rather than allowing
the number of subsymbols downloaded from each available node to be real-valued, we perform a rounding procedure which leads to an improved \emph{integral} lower bound.  
 
Fix an $\bals \in A$ and consider an arbitrary exact linear repair scheme of Reed-Solomon codes for the node storing $\fas$ that uses $b$ subsymbols from $\fq$. By~\cite[Thm.~4]{GuruswamiWootters2016}, there is a set of $\ell$ polynomials $\gox,\ldots,\glx$ such that $\rankq\big(\{\goas,\ldots,\glas\}\big) = \ell$ and $\rankq\big(\{\goa,\ldots,\gla\}\big) = b_{\bal}$, for all $\bal \in A \setminus \{\bals\}$, where
$b = \sum_{\bal \in A \setminus \{\bals\}}b_{\bal}$.  
For each $\bal \in A$, let \vspace{-8pt}
\[
S_{\bal} \define \{\vec{s} = (s_1,\ldots,s_\ell) \in \fq^\ell \colon \sum_{i=1}^\ell s_i\gia = 0\}.\vspace{-7pt}
\]
As $\rank(\{\goa,\ldots,\gla\}) = b_{\bal}$, we have 
$\dim_{\fq}(S_{\bal}) = \ell - b_{\bal}$. Averaging over all \emph{nonzero} vectors $\vec{s} \in \fq^\ell$, 
we obtain \vspace{-8pt}
\begin{multline}
\label{eq:average}
\frac{1}{q^\ell-1} \sum_{\vec{s} \in \fq^\ell \setminus \{\vec{0}\}} 
|\{\bal \in A \setminus \{\bals\} \colon \vec{s} \in S_{\bal}\}|\\
= \frac{1}{q^\ell-1} \sum_{\bal \in A \setminus \{\bals\}} 
|\{\vec{s} \in \fq^\ell \setminus \{\vec{0}\} \colon \vec{s} \in S_{\bal}\}|\\
= \frac{1}{q^\ell-1} \sum_{\bal \in A \setminus \{\bals\}} 
(q^{\ell-b_{\bal}} - 1) =: E.\vspace{-5pt}
\end{multline}
Therefore, there exists some $\vec{s}^* = (s^*_1,\ldots,s^*_\ell) \in \fq^\ell \setminus \{\vec{0}\}$ so that $|\{\bal \colon
\vec{s}^* \in S_{\bal}\}| \geq E$. Let $g(x) \define \sum_{i=1}^\ell s^*_i \gix$. 
By the choice of $\vec{s}^*$, $g(x)$ vanishes on at least $E$ points of $A \setminus \{\bals\}$. Also, since $\vec{s}^* \neq \vec{0}$, $g(\bals) = \sum_{i=1}^\ell s^*_i \gias \neq 0$. Therefore, $g(x)$ corresponds to a nonzero codeword in the dual code $\Cd$ and hence can have at most $r - 1$ roots. Thus,\vspace{-5pt}
\[
\frac{1}{q^\ell-1} \sum_{\bal \in A \setminus \{\bals\}} 
(q^{\ell-b_{\bal}} - 1) = E \leq r-1,\vspace{-5pt}
\] 
or equivalently, 
\begin{equation} 
\label{eq:feasible}
\sum_{\bal \in A \setminus \{\bals\}} q^{-b_{\bal}}
\leq \big((r-1)(q^\ell-1)+(n-1)\big) / q^\ell =: T.\vspace{-5pt}
\end{equation} 
Let \vspace{-10pt}
\begin{equation}
\label{eq:optimization} 
b_{\min} \define \min_{b_{\bal} \in \{0,1,\ldots,\ell\}} \sum_{\bal \in A \setminus \{\bals\}}b_{\bal},\quad\ \text{subject to } \eqref{eq:feasible}.  
\end{equation} 
Then, any feasible repair scheme has to have $b \geq b_{\min}$. To solve the optimization problem~(\ref{eq:optimization}), the authors of~\cite[Thm.~6]{GuruswamiWootters2016} relaxed the condition that $b_{\bal}$ are integer-valued and arrived at a lower bound that reads as $(n-1)b_{\text{AVE}}$, where $b_{\text{AVE}} \define \log_q\big((n-1)/T\big)$.
But one can still solve~\eqref{eq:optimization} for $b_{\bal} \in \{0,1,\ldots,\ell\}$ and arrive at a closed form expression for $b_{\min}$.
To see how to accomplish this analysis, we first let $\{b_1,\ldots,b_{n-1}\}$ refer to $\{b_{\bal} \colon \bal \in A \setminus \{\bals\}\}$. We then 
claim that  \vspace{-5pt}
\[
b^*_1 = \cdots = b^*_t = \lfloor b_{\text{AVE}} \rfloor, 
b^*_{t+1} = \cdots = b^*_{n-1} = \lceil b_{\text{AVE}} \rceil,
\]
where $t$ is the largest integer satisfying $\sum_{i=1}^{n-1}q^{-b^*_i} \leq T$, is an optimal solution of \eqref{eq:optimization}. To this end, if $(b_1,\ldots,b_{n-1})$ is an optimal solution of \eqref{eq:optimization}, and $b_i - b_j \geq 2$ for some $i$ and $j$, we may decrease $b_i$ by one and increase $b_j$ by one, and retain an 
optimal solution. Repeating this ``balancing'' procedure for as many times as possible, we obtain
an optimal solution for which $|b_i - b_j| \leq 1$, $i,j\in [n-1]$.
If $\min_i b_i < \lfloor b_{\text{AVE}} \rfloor$ then $(b_1,\ldots,b_{n-1})$ cannot be a feasible solution. Therefore, $\min_i b_i \geq \lfloor b_{\text{AVE}} \rfloor$. 
Because of the way $t$ was chosen, we always have 
$\sum_{i=1}^{n-1}b_i \geq \sum_{i=1}^{n-1}b^*_i$, 
which establishes the optimality of $(b^*_1,\ldots,b^*_{n-1})$.
Finally, $t$ may be easily computed as follows. 
If $b_{\text{AVE}} \in \mathbb{Z}$ then $t = n-1$, otherwise \vspace{-7pt}
\[
t = \left\lfloor \frac{T - (n-1)q^{-\lceil b_{\text{AVE}} \rceil}}{q^{-\lfloor b_{\text{AVE}} \rfloor} - q^{-\lceil b_{\text{AVE}} \rceil}}\right\rfloor. \qedhere
\]
\end{proof} 

\begin{corollary} 
\label{cr:bound1}
When $n = q^\ell$ and $r = q^m$, for some $m \in [\ell]$, 
any linear repair scheme over the subfield $\fq$ 
of a Reed-Solomon code $\rsk$ defined over $\fql$ requires a bandwidth of at least
$(n-1)(\ell-m)$ subsymbols over $\fq$. 
\end{corollary}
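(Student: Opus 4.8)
The plan is to obtain the corollary by a direct specialization of Proposition~\ref{pro:lower_bound} to the parameters $n = q^\ell$ and $r = q^m$; the key observation is that these choices force $b_{\text{AVE}}$ to be an integer, so that the rounding correction in the bound disappears and only the simple term $(n-1)\lfloor b_{\text{AVE}}\rfloor$ survives.

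Concretely, I would substitute $n-1 = q^\ell - 1$ and $r-1 = q^m - 1$ into the defining expression for $b_{\text{AVE}}$. The denominator simplifies as
\[
(r-1)(q^\ell-1) + (n-1) = (q^m-1)(q^\ell-1) + (q^\ell-1) = (q^\ell-1)\,q^m,
\]
the crucial point being that the additive term $n-1$ equals $q^\ell-1$ and so absorbs the $-1$ inside the factor $(q^m-1)$. Hence
\[
b_{\text{AVE}} = \log_q\!\left(\frac{(q^\ell-1)\,q^\ell}{(q^\ell-1)\,q^m}\right) = \log_q q^{\,\ell-m} = \ell - m \in \mathbb{Z}.
\]
Since $b_{\text{AVE}}$ is an integer, Proposition~\ref{pro:lower_bound} places us in the case $t = n-1$, so that $\lfloor b_{\text{AVE}}\rfloor = \lceil b_{\text{AVE}}\rceil = \ell-m$ and the general bound $t\lfloor b_{\text{AVE}}\rfloor + (n-1-t)\lceil b_{\text{AVE}}\rceil$ collapses to $(n-1)(\ell-m)$ subsymbols over $\fq$, which is exactly the assertion.

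I do not anticipate a genuine obstacle here: essentially all of the content lives in Proposition~\ref{pro:lower_bound}, and what remains is the one-line algebraic cancellation above together with the remark that the $t$-dependent part of the bound is inert once $b_{\text{AVE}}\in\mathbb{Z}$. As a sanity check one can note that $(n-1)(\ell-m)$ subsymbols is the same as $(n-1)(\ell-m)\log_2 q$ bits, matching the repair bandwidth of the single-erasure scheme constructed later in the paper for full-length Reed-Solomon codes with $r = q^m$; thus the bound is tight in precisely this regime, which is the point of stating the corollary.
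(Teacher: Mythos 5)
Your proposal is correct and follows exactly the paper's own argument: substituting $n=q^\ell$ and $r=q^m$ into Proposition~\ref{pro:lower_bound} gives $b_{\text{AVE}}=\ell-m\in\mathbb{Z}$, hence $t=n-1$ and the bound reduces to $(n-1)(\ell-m)$. The only difference is that you spell out the one-line cancellation that the paper leaves implicit.
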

\begin{proof} 
In this case, $b_{\text{AVE}} = \ell-m \in \mathbb{Z}$ and $t = n - 1$, which according to Proposition~\ref{pro:lower_bound} give the desired bound.  
\end{proof}  

Note that the integral bound of Corollary~\ref{cr:bound1} and the Guruswami-Wootters fractional bound coincide. 
However, in many other cases, the integral bound strictly outperforms the fractional
bound. Consider as an example the Facebook RS(14,10) code defined over $\text{GF}(256)$. If the code is repaired over the subfield $\text{GF}(16)$, 
the fractional bound results in at least $28$ downloaded bits, while our integral bound asserts that a download of at least $44$ bits is needed. It is also apparent that the fractional bound does not depend on the base field that the code is repaired over, while the integral bound does. In general, the bigger the order of the base field (given that the coding field is fixed), the larger the gap between the two bounds. 

Also, one may assume that if a repair scheme that achieves the bound of Corollary~\ref{cr:bound1} were to exist, it would require that the 
replacement node downloads $\ell-m$ subsymbols from each available node. This intuitive reasoning will be highly valuable for the design of optimal repair schemes for Reed-Solomon codes. 

\section{Repair One Erasure for Reed-Solomon codes via Subspace Polynomials}
\label{sec:single_erasure}
 
We now describe a way to select dual codewords $\{\bgi\}_{i=1}^\ell$ for $[n,k]$ Reed-Solomon codes with $r = n-k \geq q^m$, $1 \leq m \leq \ell$, so that the corresponding repair scheme for one erasure incurs a low repair bandwidth. The key ingredient of our procedure are subspace polynomials~\cite{MW_S, LidlNiederreiter1986, Goss}. 
When the code is full length, i.e. $n=q^\ell$, and $r = q^m$, the proposed repair scheme indeed achieves the minimum repair bandwidth, according to Proposition~\ref{pro:lower_bound}. 

We henceforth use $f(x)\in \fql[x]$ to denote a polynomial of degree at most $k-1$, which corresponds to a codeword of the Reed-Solomon code $\C=\rsk$, and $g(x)\in \fql[x]$ to denote a polynomial of degree at most $r-1$, which corresponds to a codeword of the dual code $\Cd$. 
In the remainder of this section, we assume that $\fas$ is the erased codeword 
symbol, where $\bals \in A$ is an evaluation point of the code. Applying the trace repair framework to the dual codewords generated by $\{g_i(x)\}_{i=1}^\ell$, we arrive at the following $\ell$ repair equations
\begin{equation}
\label{eq:repair_equations_new} 
\tr\big(\gias\fas\big) = -\hspace{-5pt}\sum_{\bal \in A \setminus\{\bals\}} \tr\big(\gia \fa\big),\quad i \in [\ell]. 
\end{equation} 
Note that in \eqref{eq:repair_equations_new} we ignore the column multipliers to simplify the notation without affecting the bandwidth results of the repair scheme (see Remark~\ref{rm:RS}). 
The column spaces of the repair scheme based on $\{g_i(x)\}_{i=1}^\ell$ are 
\begin{equation}
\label{eq:S_new}
\Saas \define \spn\big(\goa,\ldots,\gla\big), \quad \bal \in A.
\end{equation}
In order to form a repair scheme, it is required that $\dim_{\fq}(\Sasas) = \ell$. Moreover, the scheme uses a bandwidth of $b=\sum_{\bal \in A \setminus \{\bals\}}\dim_{\fq}(\Saas)$ subsymbols in $\fq$.\\

\textbf{Construction I.}
Let $\{\bbo,\ldots,\bbl\}$ be an $\fq$-basis of $\fql$ and $W$ an arbitrary $\fq$-subspace of dimension $m$ of $\fql$. Set $L_W(x) \define \prod_{\bw \in W}(x-\bw)$. The check polynomials used to repair $\fas$ are
\[
g_i(x) \define L_W\big(\bbi(x-\bals)\big)/(x-\bals),\quad i \in [\ell].
\]

Note that $L_W(x)$ constructed as in Construction~I is referred to as a \textit{subspace polynomial}, which is known to be a special type of linearized polynomials over $\fql$ (see~\cite[Ch. 4]{MW_S} and \cite[p. 4]{Goss}). 
The properties of $L_W(x)$ and $g_i(x)$ are captured in Lemma~\ref{lem:linearized}. 
Note that for an $\fq$-linear mapping $L\colon \fql\to \fql$, we use $\ker(L)\define \{\bal \in \fql\colon L(\bal)=0\}$ and $\im(L)=\{L(\bal)\colon \bal\in \fql\}$ to denote the kernel and the image of $L$, respectively. 

\begin{lemma}
\label{lem:linearized}
Let $W$ be an $m$-dimensional $\fq$-subspace of $\fql$ and $L_W(x)$ and $g_i(x)$ defined as in Construction~I. Then the following statements hold. 
\begin{enumerate}
	\item[(a)] $L_W(\cdot)$ is an $\fq$-linear mapping from $\fql$ to itself. Moreover, $\ker(L_W)=W$ and $\dim_{\fq}({\sf im}(L_W)) = \ell-m$.
	\item[(b)] $\deg(g_i) = q^m-1 \leq r-1$ for all $i \in [\ell]$.
	\item[(c)] $\gias = \tw\bbi$ for all $i \in [\ell]$, where $\tw \define \prod_{\bw \in W\setminus \{0\}}\bw$. 
	\item[(d)] $\dim_{\fq}(\Sasas) = \ell$.
	\item[(e)] $\dim_{\fq}(\Saas) \leq \dim_{\fq}({\sf im}(L_W)) = \ell-m$ for $\bal \neq \bals$. 
\end{enumerate}
\end{lemma}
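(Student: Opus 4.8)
The plan is to prove part (a) first, since parts (b)--(e) follow from it with little more than bookkeeping. The crux is that the subspace polynomial $L_W(x)=\prod_{\bw\in W}(x-\bw)$ is a linearized ($q$-)polynomial, i.e.\ of the form $\sum_{j=0}^{m}c_jx^{q^j}$. I would prove this by induction on $m=\dim_{\fq}W$: the case $W=\{\bO\}$ gives $L_W(x)=x$, and for the inductive step I choose $\bw_0\in W\setminus W'$ with $W=W'\oplus\spn(\bw_0)$ for an $(m-1)$-dimensional subspace $W'$, so that $W$ is the disjoint union of the cosets $W'+c\bw_0$, $c\in\fq$, and hence $L_W(x)=\prod_{c\in\fq}L_{W'}(x-c\bw_0)$. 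Using the inductive hypothesis that $L_{W'}$ is $\fq$-linear, each factor equals $L_{W'}(x)-c\mu$ with $\mu:=L_{W'}(\bw_0)\neq\bO$ (since $\bw_0\notin W'=\ker L_{W'}$), and then the identity $\prod_{c\in\fq}(z-c\mu)=z^{q}-\mu^{q-1}z$ with $z=L_{W'}(x)$ gives $L_W(x)=L_{W'}(x)^{q}-\mu^{q-1}L_{W'}(x)$, which is again linearized. So $L_W$ is an $\fq$-linear map $\fql\to\fql$; since by construction its roots are exactly the $q^m=\deg L_W$ elements of $W$ (all simple), $\ker L_W=W$, and rank--nullity gives $\dim_{\fq}\im(L_W)=\ell-m$.

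For (b): $L_W(\bbi(x-\bals))$ has degree exactly $q^m$ in $x$ (leading coefficient $\bbi^{q^m}\neq\bO$) and vanishes at $x=\bals$ because $L_W(\bO)=\bO$, so dividing by $(x-\bals)$ is legitimate and $g_i(x)$ is a polynomial of degree $q^m-1\le r-1$. For (c): I would compute the linear coefficient of $L_W(y)=\sum_{j=0}^m c_jy^{q^j}$ as $c_0=L_W'(\bO)=\prod_{\bw\in W\setminus\{\bO\}}(-\bw)=\tw$ (the sign is $+1$ since $q^m-1$ is even, or $q$ is a power of two); then $L_W(\bbi(x-\bals))=c_0\bbi(x-\bals)+(\text{terms divisible by }(x-\bals)^{q})$, so after dividing by $(x-\bals)$ and setting $x=\bals$ all higher terms vanish and $\gias=c_0\bbi=\tw\bbi$. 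Part (d) is then immediate: $\Sasas=\spn(\tw\bbo,\dots,\tw\bbl)=\tw\cdot\fql=\fql$, since $\tw\neq\bO$ and $\{\bbi\}_{i=1}^\ell$ is an $\fq$-basis. For (e), when $\bal\neq\bals$ one rewrites $\gia=(\bal-\bals)^{-1}L_W\big((\bal-\bals)\bbi\big)\in(\bal-\bals)^{-1}\im(L_W)$, so $\Saas\subseteq(\bal-\bals)^{-1}\im(L_W)$; since scaling by a nonzero scalar is an $\fq$-linear bijection, $\dim_{\fq}\Saas\le\dim_{\fq}\im(L_W)=\ell-m$ by (a).

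The main obstacle is part (a): showing that the product of the $q^m$ linear factors indexed by the subspace $W$ collapses into a linearized polynomial with kernel $W$. This is the only step requiring a genuine idea — the coset/induction argument above — and it can alternatively be quoted directly from the standard theory of subspace polynomials in the cited references, after which the whole lemma is routine. The only other points needing care are the sign in the constant-term computation of (c) and the (repeated) observation that $(x-\bals)\mid L_W(\bbi(x-\bals))$, which underpins both (b) and (c).
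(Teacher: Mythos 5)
Your proof is correct and follows essentially the same route as the paper: parts (b)--(e) are argued just as in the paper's proof (degree count, extracting the linear term of $L_W$ to get $\gias=\tw\bbi$, the basis argument for (d), and the rescaling $\gia=(\bal-\bals)^{-1}L_W\big(\bbi(\bal-\bals)\big)$ for (e)). The only difference is that for part (a) you supply a self-contained coset/induction proof that $L_W$ is a linearized polynomial with kernel $W$, where the paper simply cites the standard reference; this is a valid (and standard) argument, and your sign discussion in (c) matches the paper's observation that $(-1)^{q^m-1}=1$ for all $q$.
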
 
\begin{proof}[Proof of Lemma~\ref{lem:linearized}] 
A proof of Part (a) can be found in~\cite[p.~4]{Goss}.
As $|W| = q^m$, $L_W(x)$ has degree $q^m$, which implies Part~(b). To prove (c), 
we first write
\[
L_W(x) = x\prod_{\bw \in W\setminus \{0\}}(x-\bw) = \tw x + x^2M(x),
\]
for some polynomial $M(x)$ of degree $q^m-2$, noting that $(-1)^{q^m-1}=1$ for both even and odd $q$. Therefore,
\[
L_W(\bbi(x-\bals)) = \tw\bbi(x-\bals) + (\bbi(x-\bals))^2 M(x).
\] 
Hence, $\gias = \tw\bbi$, for $i \in [\ell]$. For (d), 
as $\tw \neq 0$ and $\{\bbo,\ldots,\bbl\}$ is an $\fq$-basis of $\fql$, it follows that the set $\{\goas,\ldots,\glas\}$ has rank $\ell$ over $\fq$, or equivalently, $\dim_{\fq}(\Sasas) = \ell$.

It remains to prove Part (e). 
For $\bal \neq \bals$, set $\bgai = \bbi(\bal - \bals)$ so that we have $\gia = \frac{1}{\bal - \bals}L_W\big(\bgai\big)$. Hence, using Part (a),
\[
\begin{split}
\rankq\big(\{\goa,\ldots,\gla\}\big) &= \rankq\big(\left\{L_W(\bgao),\ldots, L_W(\bgal)\right\}\big)\\
&\leq \dim_{\fq}\big({\sf im}(L_W)\big) = \ell - m,
\end{split}
\]
or equivalently, $\dim_{\fq}(\Saas) \leq \dim_{\fq}({\sf im}(L_W)) = \ell-m$ for $\bal~\neq~\bals$.
\end{proof} 

\begin{theorem}  
\label{thm:one_erasure}
Let $q^\ell \geq n \geq r \geq q^m$, where $1\leq m \leq \ell$.
The repair scheme based on check polynomials generated by Construction~I can repair a codeword symbol $\fas$ of an $[n,k]$ Reed-Solomon code using a bandwidth of at most $(n-1)(\ell-m)$ subsymbols in $\fq$. When $n=q^\ell$ and $r = q^m$, this bandwidth is optimal. 
\end{theorem}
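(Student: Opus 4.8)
The plan is to derive Theorem~\ref{thm:one_erasure} by feeding the structural facts collected in Lemma~\ref{lem:linearized} into the Guruswami--Wootters framework of Lemma~\ref{lem:GW}, and then invoking Corollary~\ref{cr:bound1} for the optimality claim. Essentially all of the substance has already been isolated in Lemma~\ref{lem:linearized}, so the theorem amounts to assembling the pieces in the right order.

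First I would check that the polynomials $g_1(x),\ldots,g_\ell(x)$ produced by Construction~I define a legitimate repair scheme for $\fas$. Since $r \ge q^m$, Part~(b) of Lemma~\ref{lem:linearized} gives $\deg(g_i) = q^m-1 \le r-1$ for every $i$, and $(x-\bals)$ does divide $L_W(\bbi(x-\bals))$ because $0 \in W$, so each $g_i$ is a genuine polynomial of degree at most $r-1$ and therefore corresponds, up to the irrelevant column multipliers $\bla_j$ dropped by Remark~\ref{rm:RS}, to a codeword $\bgi$ of the dual code $\Cd$. By Part~(d), $\dim_{\fq}(\Sasas) = \ell$, which is exactly the necessary and sufficient condition in Lemma~\ref{lem:GW} for $\{\bgi\}_{i=1}^\ell$ to repair $\fas$; equivalently, the erased symbol is recoverable from the $\ell$ target traces on the left-hand sides of the repair equations \eqref{eq:repair_equations_new} via the dual basis of $\{\bbi\}_{i=1}^\ell$ together with the scalar $\tw$.

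Next I would bound the bandwidth. By Lemma~\ref{lem:GW} the scheme downloads $\sum_{\bal \in A \setminus \{\bals\}} \dim_{\fq}(\Saas)$ subsymbols over $\fq$, with $\Saas = \spn(\goa,\ldots,\gla)$ as in \eqref{eq:S_new}. Part~(e) of Lemma~\ref{lem:linearized} bounds each summand by $\dim_{\fq}(\im(L_W)) = \ell-m$, so the total is at most $(n-1)(\ell-m)$, which is the first assertion of the theorem.

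Finally, for optimality, take $n = q^\ell$ and $r = q^m$. Corollary~\ref{cr:bound1}, a consequence of Proposition~\ref{pro:lower_bound}, shows that every linear repair scheme over $\fq$ for such a code needs at least $(n-1)(\ell-m)$ subsymbols, so the Construction~I scheme, meeting this bound, is optimal. The only genuinely delicate ingredient in the whole argument is Lemma~\ref{lem:linearized} itself --- especially Parts~(c)--(e), which rely on the subspace polynomial $L_W$ being $\fq$-linear with kernel $W$ and $(\ell-m)$-dimensional image, and on reading off its linear coefficient $\tw$; given those, Theorem~\ref{thm:one_erasure} follows by direct bookkeeping.
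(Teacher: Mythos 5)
Your proposal is correct and follows essentially the same route as the paper: it assembles Parts (b), (c)--(e) of Lemma~\ref{lem:linearized} within the framework of Lemma~\ref{lem:GW} to establish validity and the $(n-1)(\ell-m)$ bandwidth bound, and then cites Corollary~\ref{cr:bound1} for optimality when $n=q^\ell$ and $r=q^m$, exactly as the paper does.
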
 
\begin{proof} 
We use the properties of $\gix$ described in Lemma~\ref{lem:linearized} to prove the result. Part (b) of the lemma implies that the polynomials $g_i(x)$ indeed correspond to the dual codewords of the Reed-Solomon code $\C$ with $r = n-k \geq q^m$. Part (c) and Part (d) guarantee that the scheme based on $\{\gix\}_{i=1}^\ell$ can repair $\fas$. Part (e) gives an upper bound on the bandwidths used by the helper nodes that equals
\[
b=\sum_{\bal \in A \setminus \{\bals\}}\dim_{\fq}(\Saas) \leq (n-1)(\ell-m)
\]
subsymbols in $\fq$. When $n=q^\ell$ and $r=q^m$, this matches the lower bound on the repair bandwidth given in Corollary~\ref{cr:bound1}.
\end{proof} 

\begin{example} 
\label{ex:construction_I}
Consider an $[8,6]$ Reed-Solomon code $\rsk$ over $\bbF_8$ with $A \equiv \bbF_8 = \{0,1,\bxi,\bxi^2,\ldots,\bxi^6\}$, where $\bxi$ is a primitive element of $\bbF_8$ satisfying $\bxi^3+\bxi+1=0$. A repair scheme over the base field $\ft$ of the first codeword symbol $\bco=f(0)$, where $f(x) \in \bbF_8[x]$ is a polynomial of degree at most five corresponding to a codeword, can be produced by Construction~I as follows. 
As $r = 8-6= 2^1$, we have $m=1$. Also, $\bals=0$.
\begin{itemize}
	\item Select $\bbo = 1$, $\bbt = \bxi$, and $\bbe_3 = \bxi^2$. 
	\item Select a one-dimensional $\ft$-subspace $W = \{0,1\}$ of $\bbF_8$. 
	\item Let $L_W(x) = (x-0)(x-1)=x(x-1)$.
	\item Set $\gix = L_W(\bbi(x-0))/(x-0) = \bbi^2x-\bbi$, for $i \in [3]$.
\end{itemize}
Thus, the check polynomials used in the repair scheme are $g_1(x) = x-1$, $g_2(x)=\bxi^2x-\bxi$, and $g_3(x) = \bxi^4 x - \bxi^2$.
By evaluating these three polynomials at the elements of $A$, we can obtain three dual codewords as listed in Table~\ref{fig:ConstructionI}. The corresponding repair scheme has a bandwidth of $14 = (8-1)\times (3-1)$ bits, as illustrated in Example~\ref{ex:trace_repair}, which matches the statement of Theorem~\ref{thm:one_erasure}. 
The repair schemes of $\fas$ for $\bals = 1,\bxi,\bxi^2,\ldots,\bxi^6$ can be constructed in a similar manner, using the same $\bbi$'s and $W$. 
\end{example} 


\section{Repairing Two Erasures in Reed-Solomon Codes via Subspace Polynomials}
\label{sec:TwoErasureLinearizedPoly}

We consider an $[n,k]$ Reed-Solomon code $\rsk$ where $r = n - k \geq q^m$ and suppose that two codeword symbols, say $\fas$ and $\fab$, are erased.
A check polynomial $g(x)$ is said to \emph{involve} a codeword symbol $\fa$ if $\ga \neq 0$. 
When only one symbol $\fas$ is erased, every check $g(x)$ that involves $\fas$ can be used to generate a repair equation as follows.
\begin{equation} 
\label{eq:trace_repair}
\tr\big(\gas \fas\big) = -\sum_{\bal \in A \setminus \{\bals\}}\tr\big(\ga\fa\big). 
\end{equation}
However, when two symbols $\fas$ and $\fab$ are erased, in order to recover, say, $\fas$, we no longer have
the freedom to use every possible check that involves $\fas$. Indeed, those checks that involve both $\fas$ and $\fab$
cannot be used in a straightforward manner for repair, because we cannot simply compute the right-hand side sum of \eqref{eq:trace_repair} without retrieving some information from $\fab$, which is not available. 

Following the approach in~\cite{DauDuursmaKiahMilenkovicTwoErasures2017, DauDuursmaKiahMilenkovic2018}, we consider a two-phase repair scheme as follows. In the \textit{Download Phase}, each RN contacts and downloads data from the other $n-2$ available nodes, using $m$ check polynomials involving either $\fas$ or $\fab$ but not both. In this phase, each RN can recover $m$ traces about its erased codeword symbol. The missing $\ell-m$ traces can be recovered in the \textit{Collaboration Phase}, via several rounds of communication in which the two RNs exchange data based on what they have received in the Download Phase to help each other complete the repair process. 
We describe below a \textit{one-round} and a \textit{multi-round} repair schemes for two erasures. Both perform the same steps during the Download Phase but follow different procedures in the Collaboration Phase: the one-round procedure allows the two RNs to exchange all $\ell-m$ missing traces at once while the multi-round allows the missing traces to be recovered batch-by-batch. 
Note that in~\cite{DauDuursmaKiahMilenkovicTwoErasures2017, DauDuursmaKiahMilenkovic2018, ZhangZhang_ISCIT_2019}, only \textit{one} missing trace needs to be recovered in the Collaboration Phase, which is easier to handle. 

\subsection{One-Round Repair Schemes for Two Erasures}
\label{subsec:one_round}

We first show that if there is an $\fq$-subspace $W$ of $\fql$ of dimension $m$ satisfying certain properties then there exists a one-round repair scheme for two erasures with low bandwidth. We then demonstrate that such a subspace always exists when $\ell$ is even, $m\geq \ell/2$, and $q$ is a power of two. 

\begin{theorem} 
\label{thm:first_scheme}
Consider a Reed-Solomon code of full length $n = q^\ell$ and $r \geq q^m$ over $\fql$. 
Suppose there exists an $\fq$-subspace $W$ of $\fql$ of dimension $m$ satisfying
\begin{itemize}
	\item[(P1)] $\tw := \prod_{w \in W \setminus \{0\}} w \in \fq$, and
	\item[(P2)] $L_W^{\otimes 2}(\fql) = \{0\}$, or equivalently, $\im(L_W)\hspace{-2pt}\subseteq\hspace{-2pt} W \hspace{-2pt}=\hspace{-2pt} \ker(L_W)$, where $L_W(x)$ is defined as in Construction~I. 
\end{itemize}
Then there exists a scheme that can repair two arbitrary erasures for this code using a bandwidth of at most $(n-1)(\ell-m)$ subsymbols over $\fq$ per erasure.  
\end{theorem}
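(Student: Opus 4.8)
The plan is to adapt the two-phase (Download/Collaboration) strategy of~\cite{DauDuursmaKiahMilenkovicTwoErasures2017,DauDuursmaKiahMilenkovic2018} to Construction~I, letting the basis depend on the erasure pattern. Fix the two erased positions $\bals,\balb\in A=\fql$ (here full length is used so that any pair of distinct field elements is a legal erasure pattern) and put $\bde:=\balb-\bals\in\fqls$. I would pick an $\fq$-basis $\{\bbo,\ldots,\bbl\}$ of $\fql$ whose first $m$ vectors form a basis of the $m$-dimensional $\fq$-subspace $\bde^{-1}W$, give the replacement node $\mathrm{RN}_1$ for $\fas$ the $\ell$ check polynomials $g_i(x):=L_W(\bbi(x-\bals))/(x-\bals)$ of Construction~I, and give $\mathrm{RN}_2$ for $\fab$ the symmetric $h_i(x):=L_W(\bbi(x-\balb))/(x-\balb)$. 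By Lemma~\ref{lem:linearized}(b) these have degree $q^m-1\le r-1$, so they are dual codewords; since $n=q^\ell$ all dual multipliers are $1$, giving for each $i$ the repair equation
\begin{equation}\label{eq:pf_plan_rep}
\tr\!\big(g_i(\bals)\fas\big)+\tr\!\big(g_i(\balb)\fab\big)=-\!\!\sum_{\bal\in A\setminus\{\bals,\balb\}}\!\!\tr\!\big(g_i(\bal)f(\bal)\big)=:R_i .
\end{equation}
By Lemma~\ref{lem:linearized}(c), $g_i(\bals)=\tw\bbi$, and since $\tw\neq 0$ and (by (P1)) $\tw\in\fq$, the set $\{\tw\bbi\}_{i=1}^\ell$ is again an $\fq$-basis; hence $\mathrm{RN}_1$ recovers $\fas$ once it knows all $\ell$ target traces $\tr(\tw\bbi\,\fas)$, and likewise $\mathrm{RN}_2$.

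In the \emph{Download Phase} I would note that $g_i(\balb)=\bde^{-1}L_W(\bbi\bde)$ vanishes precisely when $\bbi\bde\in\ker(L_W)=W$, i.e.\ for $i\in[m]$, so for those $i$ equation~\eqref{eq:pf_plan_rep} already expresses $\tr(\tw\bbi\,\fas)$ through available nodes only. $\mathrm{RN}_1$ downloads from each available node $\bal$ (so $\bal\neq\bals,\balb$) the traces $\{\tr(g_i(\bal)f(\bal))\}_{i=1}^\ell$; exactly as in the proof of Lemma~\ref{lem:linearized}(e), the vectors $g_i(\bal)=(\bal-\bals)^{-1}L_W(\bbi(\bal-\bals))$ span a space of dimension $\le\ell-m$, so this costs $\le\ell-m$ subsymbols per node, $\le(n-2)(\ell-m)$ in total. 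From these $\mathrm{RN}_1$ computes all $R_i$, hence learns $\tr(v\,\fas)$ for every $v$ in the $m$-dimensional space $\tw\,\bde^{-1}W$, which by (P1) equals $\bde^{-1}W$. By symmetry (using $\bals-\balb=-\bde$ and $(-\bde)^{-1}W=\bde^{-1}W$) $\mathrm{RN}_2$ ends the Download Phase knowing $\tr(v\,\fab)$ for all $v\in\bde^{-1}W$, at the same cost.

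For the \emph{Collaboration Phase} (one round), $\mathrm{RN}_1$ still needs $\tr(\tw\bbi\,\fas)$ for $i\in[m+1,\ell]$, and by~\eqref{eq:pf_plan_rep} it suffices for it to obtain the cross traces $\tau_i:=\tr(g_i(\balb)\fab)$, $i\in[m+1,\ell]$. The key point is that, by~(P2), $g_i(\balb)=\bde^{-1}L_W(\bbi\bde)\in\bde^{-1}\im(L_W)\subseteq\bde^{-1}W$; expanding $g_i(\balb)$ in a fixed $\fq$-basis of $\bde^{-1}W$ shows each $\tau_i$ is an $\fq$-linear combination of values $\tr(v\,\fab)$ with $v\in\bde^{-1}W$ --- all of which $\mathrm{RN}_2$ already holds. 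Thus $\mathrm{RN}_2$ computes the $\ell-m$ cross traces with no further download and sends them to $\mathrm{RN}_1$; symmetrically $\mathrm{RN}_1$ sends $\mathrm{RN}_2$ its cross traces in the same round. Finally $\mathrm{RN}_1$ reads off $\tr(\tw\bbi\,\fas)=R_i-\tau_i$ for $i\in[m+1,\ell]$, now has all $\ell$ target traces, and recovers $\fas$ (and $\mathrm{RN}_2$ recovers $\fab$). Counting bandwidth, each node downloads $\le(n-2)(\ell-m)$ in the Download Phase and $\le\ell-m$ in the Collaboration Phase, totalling $\le(n-1)(\ell-m)$ subsymbols over $\fq$ per erasure.

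I expect the technical heart --- and the main obstacle --- to be exactly the Collaboration-Phase compatibility just used: unlike in~\cite{DauDuursmaKiahMilenkovicTwoErasures2017,DauDuursmaKiahMilenkovic2018,ZhangZhang_ISCIT_2019}, where a \emph{single} missing trace is exchanged, here a whole batch of $\ell-m$ missing traces of $\fas$ must be reconstructed, and one must show that the information $\mathrm{RN}_2$ downloaded while repairing \emph{its own} symbol already determines every cross trace of $\mathrm{RN}_1$, with no extra round and no extra traffic from the $n-2$ helpers. Property~(P2) is what confines each cross value $g_i(\balb)$ to $W$ (up to the scalar $\bde^{-1}$), and property~(P1) is what makes scaling by $\tw$ preserve $\bde^{-1}W$; verifying that these two facts exactly match the two RNs' Download-Phase knowledge is where the argument has to be careful. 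Everything else is bookkeeping with Lemma~\ref{lem:linearized}.
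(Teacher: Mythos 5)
Your proposal is correct and follows essentially the same route as the paper: it is the paper's Construction~II (basis of $W/(\balb-\bals)$ extended to a basis of $\fql$, with the two families $g_i,h_i$) together with Lemma~\ref{lem:constructionII}, whose parts (b)--(d) you re-derive inline, and the identical Download/one-round Collaboration accounting giving $(n-2)(\ell-m)+(\ell-m)$ subsymbols per erasure. The key step you flag --- (P2) confining the cross values $g_i(\balb)$ to $\bde^{-1}W$ so that the other RN can synthesize all $\ell-m$ cross traces from its own Download-Phase target traces --- is exactly the paper's argument.
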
 

Note that a linearized polynomial $L(x)\in \fql[x]$ satisfying $L^{\otimes 2}(\fql) = \{0\}$, or equivalently, $L(L(x)) \equiv 0\pmod{x^{q^\ell}-x}$, is referred to as a \emph{$2$-nilpotent linearized polynomial} (see~\cite{Reis2018}). However, such a polynomial may not even be a subspace polynomial.
We first modify Construction~I to cope with two erasures. 

\textbf{Construction II.}
Let $W$ be the $m$-dimensional $\fq$-subspace of $\fql$ satisfying (P1) and (P2). Let $\{\bbo,\ldots,\bbm\}$ be an $\fq$-basis of $W/(\balb-\bals)$, where $\bbmpo,\ldots,\bbl$ are chosen so that $\{\bbo,\ldots,\bbl\}$ forms an $\fq$-basis of $\fql$ but are otherwise arbitrary. The check polynomials used to repair $\fas$ and $\fab$ are, respectively,
\[
\begin{split}
g_i(x)& = L_W(\bbi(x-\bals))/(x-\bals),\quad i \in [\ell],\\
h_i(x) &= L_W(\bbi(x-\balb))/(x-\balb),\quad i \in [\ell].
\end{split} 
\]

The following properties of $g_i$ and $h_i$ are needed to show that Construction~II ensures the claimed results. 
\begin{lemma}
\label{lem:constructionII}
Let $W$, $g_i$, $h_i$ be defined as in Construction~II.
Then the following statements hold. 
\begin{itemize}
	\item[(a)]$\deg(g_i(x)) = \deg(h_i(x)) = q^m-1 \leq r-1$, for all $i \in [\ell]$.
	\item[(b)] $g_i(\bals) = h_i(\balb) = \tw\bbi$, for all $i \in [\ell]$.
	\item[(c)] $g_i(\balb) = h_i(\bals) = 0$, for all $i \in [m]$.  
	\item[(d)] $\giab$ and $\hias$ belong to $\spn\{\tw\bbo,\ldots,\tw\bbm\}$, for $i \in [m+1,\ell]$. 
\end{itemize}  
\end{lemma}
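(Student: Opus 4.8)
The plan is to obtain all four parts directly from Construction~II, the structural facts about subspace polynomials recalled in Lemma~\ref{lem:linearized}(a), and the two hypotheses (P1)--(P2).

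Parts (a) and (b) are, word for word, the arguments of Lemma~\ref{lem:linearized}(b) and~(c), run once with the substitution $x\mapsto x-\bals$ and once with $x\mapsto x-\balb$. I would first note that $L_W(x)$ is monic of degree $q^m=|W|$ and that $L_W(0)=0$, so $L_W(\bbi(x-\bals))$ is a polynomial of degree $q^m$ in $x$ divisible by $x-\bals$; hence $g_i(x)$ is a genuine polynomial of degree $q^m-1$, and $q^m-1\le r-1$ since $r\ge q^m$. The same holds for $h_i$. For (b), expand $L_W(x)=\tw x + x^2 M(x)$ (using $(-1)^{q^m-1}=1$ in every characteristic), substitute $\bbi(x-\bals)$, divide by $x-\bals$, and evaluate at $x=\bals$ to get $g_i(\bals)=\tw\bbi$; symmetrically $h_i(\balb)=\tw\bbi$.

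For part (c) I would invoke the choice of basis: for $i\in[m]$ the element $(\balb-\bals)\bbi$ lies in $W$, which by Lemma~\ref{lem:linearized}(a) equals $\ker(L_W)$, so $L_W(\bbi(\balb-\bals))=0$ and therefore $g_i(\balb)=L_W(\bbi(\balb-\bals))/(\balb-\bals)=0$. For $h_i(\bals)$, observe $\bbi(\bals-\balb)=-\bbi(\balb-\bals)\in W$ since $W$ is an $\fq$-subspace, hence $L_W(\bbi(\bals-\balb))=0$ and $h_i(\bals)=0$.

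Part (d) is the genuinely new point and the one I expect to be the crux. For any $i$, $g_i(\balb)=\frac{1}{\balb-\bals}L_W\big(\bbi(\balb-\bals)\big)$; since $L_W(\bbi(\balb-\bals))\in\im(L_W)$ and (P2) gives $\im(L_W)\subseteq W$, this element lies in $\frac{1}{\balb-\bals}W$. Because $\{\bbo,\ldots,\bbm\}$ is an $\fq$-basis of $W/(\balb-\bals)$, that subspace is exactly $\spn\{\bbo,\ldots,\bbm\}$. Finally, by (P1) the scalar $\tw$ lies in $\fq^*$, so scaling the basis $\bbo,\ldots,\bbm$ by $\tw$ yields another $\fq$-basis of the same subspace, i.e.\ $\spn\{\bbo,\ldots,\bbm\}=\spn\{\tw\bbo,\ldots,\tw\bbm\}$; hence $g_i(\balb)\in\spn\{\tw\bbo,\ldots,\tw\bbm\}$. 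The argument for $h_i(\bals)=\frac{1}{\bals-\balb}L_W(\bbi(\bals-\balb))$ is identical, using $\frac{1}{\bals-\balb}W=\frac{1}{\balb-\bals}W$ since $W$ is a subspace. The only thing to watch carefully here is the division of labour between the hypotheses: (P2) is what pushes the image back inside $W$, whereas (P1) is precisely what makes the coordinate subspace $\spn\{\tw\bbo,\ldots,\tw\bbm\}$ coincide with $W/(\balb-\bals)$ rather than with a scaled copy $\tw W/(\balb-\bals)$; everything else is routine polynomial bookkeeping.
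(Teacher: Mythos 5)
Your proposal is correct and follows essentially the same route as the paper's proof: parts (a)--(c) by the expansion $L_W(x)=\tw x+x^2M(x)$ and the kernel property $\bbi(\balb-\bals)\in W=\ker(L_W)$, and part (d) by combining (P2) ($\im(L_W)\subseteq W$) with (P1) ($\tw\in\fq^*$) to place $\giab,\hias$ in $W/(\balb-\bals)=\spn\{\tw\bbo,\ldots,\tw\bbm\}$. Your added remarks on divisibility by $x-\bals$ in (a) and on the precise roles of (P1) versus (P2) in (d) are accurate refinements of the same argument.
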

Note that (a) implies that $g_i$ and $h_i$ are check polynomials for the code, (b) guarantees that $\{g_i(x)\}_{i=1}^\ell$ and $\{h_i(x)\}_{i=1}^\ell$ form repair schemes for $\fas$ and $\fab$, respectively, while (b) and (c) together imply that $g_1(x),\ldots,g_m(x)$ involve $\fas$ but not $\fab$, and $h_1(x),\ldots,h_m(x)$ involve $\fab$ but not $\fas$. Finally, (d) allows the RNs to help each other using the information obtained from the Download Phase (we will discuss this point in detail later). 
\begin{proof}[Proof of Lemma~\ref{lem:constructionII}]
As $|W| = q^m$, $L_W$ has degree $q^m$, which implies Statement (a). To prove (b), note that $L_W(\bbi(x-\bals)) = \tw\bbi(x-\bals) + (\bbi(x-\bals))^2 M(x)$, for some polynomial $M(x)$, and therefore, $g_i(\bals) = \tw\bbi$. 
The same argument works for $h_i(x)$.  
Next, for $i \in [m]$, as $\bbi(\balb-\bals) \in W$, which is the kernel (or set of roots) 
of $L_W$, it holds that $L_W(\bbi(\balb-\bals)) = 0$, which implies that $g_i(\balb) = 0$. Similarly, $h_i(\bals) = 0$ for all $i \in [m]$, which proves (c). 
Finally, to establish (d), note that (P2) implies that $L_W(\fql) \subseteq \ker(L_W)=W$. Therefore, $\giab = L_W(\bbi(\balb-\bals))/(\balb-\bals) \in W/(\balb-\bals) = \spn\{\bbo,\ldots,\bbm\}=\spn\{\tw\bbo,\ldots,\tw\bbm\}$ as $\tw\in \fq$, for all $i \in [m+1,\ell]$. Similar argument holds for $\hias$, noting that $W/(\balb-\bals) \equiv W/(\bals-\balb)$.   
\end{proof} 
\vspace{-5pt}

\begin{figure}[htb]
\centering
\includegraphics[scale=0.8]{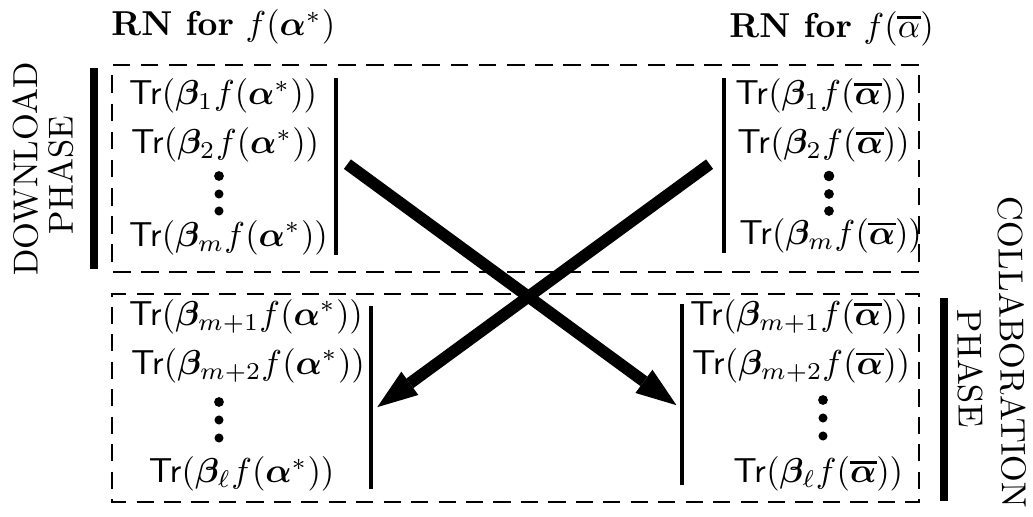}
\caption{Illustration of the two-round repair scheme using check polynomials from Construction~II. The arrows mean the \textit{target} traces available at an RN after the Download Phase can then be used to generate (as linear combinations) the \textit{repair} traces to reconstruct the \textit{target} traces at the other RN in the Collaboration Phase ($\tw$ is ignored for notational simplicity).}
\label{fig:one_round}
\end{figure} 
\vspace{-5pt}

\begin{proof}[Proof of Theorem~\ref{thm:first_scheme}]
We describe below the two phases of the repair scheme based on the two sets of check polynomials $\{g_i(x)\}_{i=1}^\ell$ and $\{h_i(x)\}_{i=1}^\ell$ obtained in Construction~II. 

\textbf{Download Phase.}
In this phase, each RN contacts $n-2$ available nodes to download repair traces and also recover $m$ target traces. 
By Lemma~\ref{lem:linearized}~(e), to obtain all the repair traces $\{\tr\big(\gia \fa\big)\}_{i=1}^\ell$ where $\bal \in A\setminus\{\bals,\balb\}$, the RN for $\fas$ needs to download at most $\ell-m$ traces/subsymbols over $\fq$ from the helper node storing $\fa$.
Hence, it uses a bandwidth of at most $(n-2)(\ell-m)$ subsymbols in this phase.
Next, the RN for $\fas$ uses the first $m$ check polynomials $g_1,\ldots,g_m$, which do not involve $\fab$ according to Lemma~\ref{lem:constructionII}~(c), to construct the following $m$ repair equations.
\[\tr\big(\gias\fas\big) = -\hspace{-5pt}\sum_{\bal \in A \setminus\{\bals,\balb\}} \tr\big(\gia \fa\big),\quad i \in [m]. 
\]
As the result, it can reconstruct $m$ target traces $\tr\big(\gias\fas\big)=\tr\big(\tw\bbi\fas\big)$, $i \in [m]$, of $\fas$. Similarly, the RN for $\fab$ uses the $m$ repair equations
\[ 
\tr\big(\hiab\fab\big) = -\hspace{-5pt}\sum_{\bal \in A \setminus\{\bals,\balb\}} \tr\big(\hia \fa\big),\quad i \in [m], 
\]
to reconstruct $m$ target traces $\tr\big(\hiab\fab\big)=\tr\big(\tw\bbi\fab\big)$, $i \in [m]$, of $\fab$. The bandwidth spent is at most $(n-2)(\ell-m)$ subsymbols over $\fq$.

\textbf{Collaboration Phase.}
In this phase, the two RNs exchange information to help each other recover the last $\ell-m$ missing target traces. To this end, the following two sets of $\ell-m$ repair equations each for $\fas$ based on $g_{m+1},\ldots,g_\ell$ and for $\fab$ based on $h_{m+1},\ldots,h_\ell$ can be used. 
For $i \in [m+1,\ell]$,
\begin{multline}
\label{eq:gl}
\tr\big(\gias\fas\big) + \tr\big(\giab\fab\big)\\ 
= -\sum_{\bal \in A \setminus \{\bals,\overline{\bal}\}} \tr\big(\gia\fa\big).
\end{multline}
\vspace{-15pt}
\begin{multline}
\label{eq:hl}
\tr\big(\hiab\fab\big) + \tr\big(\hias\fas\big)\\ 
= - \sum_{\bal \in A \setminus \{\bals,\overline{\bal}\}} \tr\big(\hia\fa\big).
 \vspace{-10pt}
\end{multline}
It is clear that from the repair traces collected in the
Download Phase that the right-hand-sides of~\eqref{eq:gl} and \eqref{eq:hl} can be determined. 
However, to determine the target traces $\tr\big(\gias\fas\big)$, the RN for $\fas$ also needs to know the repair traces $\tr\big(\giab\fab\big)$, $i \in [m+1,\ell]$.
It turns out that these repair traces can be deduced from the target traces $\tr\big(\hiab\fab\big)$, $i \in [m]$, which are already available at the RN for $\fab$ (see Fig.~\ref{fig:one_round} for an illustration). 
Indeed, by Lemma~\ref{lem:constructionII}~(d), $\giab \in \spn\{\tw\bbo,\ldots,\tw\bbm\}$ for $i \in [m+1,\ell]$. Hence, the RN for $\fab$ can compute $\tr\big(\giab\fab\big)$, $i \in [m+1,\ell]$, as linear combinations of $\tr\big(\tw\bbi\fab\big)=\tr\big(\hiab\fab\big)$, $i \in [m]$, and send these repair traces over to the RN for $\fas$. 
Likewise, the RN for $\fas$ can compute $\tr\big(\hias\fas\big)$, $i \in [m+1,\ell]$, based on $\tr\big(\tw\bbi\fas\big)=\tr(\gias\fas)$, $i\in [m]$, and send these repair traces to the RN for $\fab$. Once all repair traces are available, each RN can recover the missing $\ell-m$ target traces and then the corresponding codeword symbol. 

The bandwidth used in the Collaboration Phase is $\ell-m$ subsymbols over $\fq$ per erasure. Combining with that in the Download Phase, we conclude that this repair scheme incurs a bandwidth of at most $(n-1)(\ell-m)$ subsymbols over $\fq$.
\end{proof}

An important question to ask is whether there exists a subspace $W$ satisfying both properties (P1) and (P2) listed in Theorem~\ref{thm:first_scheme}. In the remainder of this subsection, we establish the existence of such a subspace when $q$ and $\ell$ are even and $m \geq \ell/2$.  

We first describe our key idea. Consider even $q$ and $\ell$, and $m \geq \ell/2$. Note that it is necessary that $m \geq \ell/2$ for (P2) to hold. Indeed, (P2) implies that $\im(L_W) \subseteq \ker(L_W) = W$, which means that $\ell-m = \dim_{\fq}(\im(L_W)) \leq \dim_{\fq}(\ker(L_W)) = m$, or $m\geq \ell/2$.
It seems quite difficult to construct directly a subspace $W$ satisfying both (P1) and (P2). Our strategy is to first construct a subspace satisfying (P1) and then turn it into a subspace satisfying both (P1) and (P2). 
The construction of $W$ is broken down into two steps. 
\begin{itemize}
	\item \textbf{Step 1} Constructing a subspace $U$ of dimension $m-\ell/2$ in $\bbF_{q^{\ell/2}}$ satisfying $\tau_U \in \fq$ (Lemma~\ref{lem:good_tauW}).
	\item \textbf{Step 2} Using $U$ and $\bbF_{q^{\ell/2}}$ to construct $W$, which satisfies both (P1) and (P2) (Lemma~\ref{lem:reduction}). 
\end{itemize}
 
\begin{lemma} 
\label{lem:good_tauW}
For all $m \in [\ell]$, there always exists an $m$-dimensional $\fq$-subspace $U$ of $\fql$ satisfying $\tau_U =\pm 1 \in \fq$.
\end{lemma}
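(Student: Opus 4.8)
The plan is to exploit two kinds of slack in the statement: one may freely rescale the subspace, and one may choose the subspace to be defined over the subfield $\mathbb{F}_{q^d}$, where $d:=\gcd(m,\ell)$. The starting observation is that if $U'=cU$ for $c\in\fql^*$ then $\tau_{U'}=c^{q^m-1}\tau_U$; hence for a fixed $U$ the attainable values of $\tau$ form the coset $\tau_U\cdot(\fql^*)^{q^m-1}$, and it is enough to exhibit \emph{some} $m$-dimensional $\fq$-subspace $U_0$ with $\tau_{U_0}\in\pm(\fql^*)^{q^m-1}$ and then rescale. Because $\gcd(q^m-1,q^\ell-1)=q^{\gcd(m,\ell)}-1=q^d-1$, the subgroup $(\fql^*)^{q^m-1}$ is precisely the subgroup of order $(q^\ell-1)/(q^d-1)$ of the cyclic group $\fql^*$, which also equals $(\fql^*)^{q^d-1}$; so the actual goal becomes: find $U_0$ with $\tau_{U_0}\in\pm(\fql^*)^{q^d-1}$.

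To produce such a $U_0$, I would take $U_0$ to be any $\mathbb{F}_{q^d}$-subspace of $\fql$ of $\mathbb{F}_{q^d}$-dimension $m/d$ (which exists since $m/d\le\ell/d=[\fql:\mathbb{F}_{q^d}]$); it has $\fq$-dimension $m$. Next I would decompose $U_0\setminus\{0\}$ into its one-dimensional $\mathbb{F}_{q^d}$-subspaces: over a line $\mathbb{F}_{q^d}v$ the product of the nonzero elements is $v^{q^d-1}\prod_{a\in\mathbb{F}_{q^d}^*}a=-v^{q^d-1}$, using the classical fact that the product of all nonzero elements of a finite field is $-1$. Multiplying over the $(q^m-1)/(q^d-1)$ lines yields $\tau_{U_0}=(-1)^{(q^m-1)/(q^d-1)}\bigl(\prod_{\mathrm{lines}}v\bigr)^{q^d-1}$, which lies in $(-1)^{m/d}(\fql^*)^{q^d-1}=(-1)^{m/d}(\fql^*)^{q^m-1}$ because $(q^m-1)/(q^d-1)\equiv m/d\pmod 2$ (the sign being vacuous in characteristic $2$). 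Finally, writing $\tau_{U_0}=(-1)^{m/d}w^{q^m-1}$ with $w\in\fql^*$ and putting $U:=w^{-1}U_0$, an $m$-dimensional $\fq$-subspace, I obtain $\tau_U=(w^{-1})^{q^m-1}\tau_{U_0}=(-1)^{m/d}=\pm1\in\fq$. As a consistency check, when $m\mid\ell$ one can instead take $U=\mathbb{F}_{q^m}$ directly, for which $\tau_U=-1$.

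The only step requiring a genuine idea is demanding that $U_0$ be $\mathbb{F}_{q^d}$-rational. If one rescales an arbitrary $m$-dimensional $\fq$-subspace, the line-product computation above collapses to an $\fq$-line computation and gives only $\tau\in\pm(\fql^*)^{q-1}$; for $d>1$ this is too coarse, since the rescaling action shifts $\tau$ only within the strictly smaller subgroup $(\fql^*)^{q^m-1}=(\fql^*)^{q^d-1}$, and $\pm1$ need not be reachable from such a coset. Recognizing that the problem should be set over $\mathbb{F}_{q^d}$ rather than $\fq$ is the heart of the matter; once that choice is made, the remaining ingredients — the value $-1$ for the product of nonzero field elements, the identity $\gcd(q^m-1,q^\ell-1)=q^{\gcd(m,\ell)}-1$, and the rescaling — are routine.
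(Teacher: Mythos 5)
Your proof is correct and follows essentially the same route as the paper: you pick $U_0$ as an $\mathbb{F}_{q^d}$-subspace with $d=\gcd(m,\ell)$, compute $\tau_{U_0}$ by decomposing $U_0\setminus\{0\}$ into multiplicative cosets (lines) of $\mathbb{F}_{q^d}^*$ to get $\pm$ a $(q^d-1)$-th power, and then rescale $U_0$ to force $\tau=\pm1$. The only difference is cosmetic: the paper carries out the rescaling step via a primitive element and the B\'ezout identity $x(q^s-1)+y(q^m-1)=z(q^\ell-1)$, which is exactly the fact you express as $(\fql^*)^{q^m-1}=(\fql^*)^{q^d-1}$ inside the cyclic group $\fql^*$.
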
 
\begin{proof} 
We prove the following two claims, which together establish Lemma~\ref{lem:good_tauW}. 

\textbf{Claim 1.} For $1 \leq m \leq \ell$ and $s = \text{gcd}(m,\ell)$, there exists an $\fq$-subspace $U_0$ of $\fql$ of dimension $m$ over $\fq$ satisfying $\tau_{U_0} \define \prod_{\bu \in U_0 \setminus \{0\}} \bu = \pm \bzt^{x(q^s-1)}$, for some integer $x$, where $\bzt$ is a primitive element of $\fql$.

To prove Claim~1, take $U_0$ to be an $m/s$-dimensional $\fqs$-subspace of $\fql$. Such a subspace always exists because $m/s \in \bbZ$ and $\ell/s \in \bbZ$. Then $\dim_{\fq}(U_0)=m$. Define a relation ``$\sim$" in $U_0^*\define U_0\setminus \{0\}$ as follows: for $u,v\in U_0^*$, $u \sim v$ if $u/v \in \fqss$. One can verify that this is an equivalence relation and its equivalence classes are multiplicative cosets of $\fqss$, which are of the form $\bu\fqss$, $\bu \in U_0^*$.  
Let $\{\bu_i\fqss \colon i = 1,\ldots,\frac{q^m-1}{q^s-1}\}$ be the set of all disjoint multiplicative cosets of $\fqss$ in $U_0^*$, each of which is of size $q^s-1$. Then, \vspace{-8pt}
\[
U_0^* = \bigcup_{i=1}^{\frac{q^m-1}{q^s-1}}\bu_i\bbF^*_{q^s},\vspace{-5pt}
\] 
which implies that
\[
\tau_{U_0} = \bigg( \prod_{i=1}^{\frac{q^m-1}{q^s-1}} \bu_i \bigg)^{q^s-1} \big( \tau_{\bbF_{q^s}} \big)^{\frac{q^m-1}{q^s-1}}.
\]
Note that $\tau_{\bbF_{q^s}} = -1$. Therefore, choosing an integer $x$ such that $\bzt^x = \prod_{i=1}^{\frac{q^m-1}{q^s-1}} \bu_i$,  we have
$\tau_{U_0} = \pm \bzt^{x(q^s-1)}$, as claimed.

\textbf{Claim 2.} Suppose that there exists an $\fq$-subspace $U_0$ of $\fql$ of dimension $m$ over $\fq$ satisfying $\tau_{U_0} = \pm \bzt^{x(q^s-1)}$ for some integer $x$, where $s = \text{gcd}(m,\ell)$. Then there exists an $m$-dimensional $\fq$-subspace $U$ of $\fql$ satisfying $\tau_U = \pm 1$. 

To prove Claim~2, note that as $\text{gcd}(q^m-1,q^\ell-1)=q^s~-~1$ (based on Euclid's algorithm), there exist integers $y$ and $z$ satisfying 
\[
x(q^s-1) + y(q^m-1) = z(q^\ell-1).
\]
Set $\bga = \bzt^y$ and $U \define \bga U_0$. Then 
\[
\tau_U = \bga^{q^m-1}\tau_{U_0} = \pm \bzt^{y(q^m-1)}\bzt^{x(q^s-1)} = \pm \bzt^{z(q^\ell-1)} = \pm 1. 
\]
Claim 1 and Claim 2 prove Lemma~\ref{lem:good_tauW}. 
\end{proof} 

Lemma~\ref{lem:reduction} is referred to as the \textit{Reduction Lemma} because it reduces the existence of a subspace of $\fql$ satisfying (P1) and (P2) to the existence of a subspace of $\fqlt$ satisfying (P1) only. 

\begin{lemma}[Reduction Lemma] 
\label{lem:reduction}
Suppose that $q$ and $\ell$ are even and $m \in [\ell/2+1,\ell]$. If there exists an $\fq$-subspace $U$ of $\fqlt$ of dimension $m-\ell/2$ satisfying (P1), i.e., $\tau_U \in \fq$, then there exists an $\fq$-subspace $W$ of $\fql$ of dimension $m$ satisfying both (P1) and (P2), i.e., $\tw \in \fq$ and $L_W^{\otimes 2}(\fql) = \{0\}$. 
\end{lemma}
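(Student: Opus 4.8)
\emph{Proof plan.} The plan is to build $W$ as the preimage of $U$ under the relative trace $\trqlqlt\colon\fql\to\fqlt$ --- equivalently, as the kernel of the composed subspace polynomial $L_U\circ\trqlqlt$. The feature being exploited is a characteristic-two coincidence: since $[\fql:\fqlt]=2$ and $q$ is even, $\trqlqlt(x)=x^{q^{\ell/2}}+x=x^{q^{\ell/2}}-x=\prod_{c\in\fqlt}(x-c)$, so $\trqlqlt$ is simultaneously the subspace polynomial of $\fqlt$, and hence $\ker(\trqlqlt)=\im(\trqlqlt)=\fqlt$. Concretely, given $U$ with $\tau_U\in\fq$, set $L_U(x)\define\prod_{\bu\in U}(x-\bu)$, define $L_W(x)\define L_U\big(\trqlqlt(x)\big)$, and let $W\define\ker(L_W)=\{x\in\fql:\trqlqlt(x)\in U\}$. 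As a composition of linearized polynomials, $L_W$ is linearized, so $W$ is an $\fq$-subspace of $\fql$.

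First I would check that $L_W$ is the subspace polynomial of $W$ in the sense of Construction~I, that $\dim_{\fq}(W)=m$, and that (P1) holds --- all at once. Expanding $L_U(y)=\sum_i c_iy^{q^i}$ and using $\big(\trqlqlt(x)\big)^{q^i}=x^{q^i}+x^{q^{\ell/2+i}}$, one sees that the coefficient of $x$ in $L_W$ equals $c_0$, the coefficient of $y$ in $L_U$, which in characteristic two is $\tau_U\neq0$. A linearized polynomial with nonzero linear coefficient is separable, so $L_W$ has $\deg L_W=\deg L_U\cdot\deg\trqlqlt=q^{m-\ell/2}\,q^{\ell/2}=q^m$ distinct roots; thus $|W|=q^m$, $\dim_{\fq}(W)=m$, and $L_W$, being monic of degree $q^m$ with root set exactly $W$, equals $\prod_{w\in W}(x-w)$. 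Since the coefficient of $x$ in $\prod_{w\in W}(x-w)=x\prod_{w\in W\setminus\{0\}}(x-w)$ is $(-1)^{q^m-1}\tw=\tw$ in characteristic two, we conclude $\tw=\tau_U\in\fq$, which is exactly (P1).

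Next I would establish (P2). Because $\trqlqlt$ maps $\fql$ onto $\fqlt$, $\im(L_W)=L_U(\fqlt)$; and since the coefficients of $L_U$ are symmetric functions of elements of $U\subseteq\fqlt$, they lie in $\fqlt$, so $L_U(\fqlt)\subseteq\fqlt$. On the other hand $\trqlqlt(\fqlt)=\{0\}\subseteq U$, hence $\fqlt\subseteq W$. Therefore $\im(L_W)\subseteq\fqlt\subseteq W=\ker(L_W)$, i.e.\ $L_W^{\otimes 2}(\fql)=\{0\}$, which is (P2). (The degenerate case $m=\ell$, where $U=\fqlt$ and $W=\fql$, is subsumed: then $L_W(x)=x^{q^\ell}-x\equiv0$ and $\tw=\tau_{\fql}=-1=1$.) I expect the only genuine obstacle to be finding this construction --- recognising that $W$ should be the trace-preimage of $U$, so that the characteristic-two identity $\ker(\trqlqlt)=\im(\trqlqlt)=\fqlt$ forces (P2) for free; once that is in place, the dimension count and (P1) both drop out of the single computation of the linear coefficient of $L_W$.
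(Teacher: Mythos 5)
Your construction is correct and, at bottom, produces the same subspace as the paper: your $W=\{x\in\fql:\trqlqlt(x)\in U\}$ is exactly the paper's $W=\fqlt\oplus V$, where $V$ is a lift of $U$ along $\sigma=\trqlqlt$ with $V\cap\fqlt=\{0\}$. What differs is the route of verification. The paper builds $V$ explicitly by lifting an $\fq$-basis of $U$ through the onto map $\sigma$ (checking the lifted vectors are independent and meet $\fqlt$ trivially), computes $\tw$ by splitting $W^*$ into $\fqlt^*$ and the cosets $\bv+\fqlt$, $\bv\in V^*$, so that $\tw=\prod_{\bv\in V^*}\sigma(\bv)=\tau_U$, and obtains (P2) from the factorization $L_W(x)=H(x)\sigma(x)$ combined with $L_W(\fql)\subseteq\fqlt=\ker(\sigma)$. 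You instead work with the composition $L_U\circ\sigma$, read (P1) off its coefficient of $x$, and get (P2) from $\im(L_U\circ\sigma)=L_U(\fqlt)\subseteq\fqlt\subseteq W$, since the coefficients of $L_U$ lie in $\fqlt$. This avoids both the basis-lifting argument and the coset product manipulation, at the price of having to identify $L_U\circ\sigma$ with the subspace polynomial of $W$; it is a clean and slightly more algebraic packaging of the same idea.

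One step needs an extra line: from ``$L_U\circ\sigma$ is separable, hence has $q^m$ distinct roots'' you jump to $|W|=q^m$, but $W$ was defined as the set of roots lying in $\fql$, so you must also argue that all roots of $L_U\circ\sigma$ belong to $\fql$ (equivalently, that $\dim_{\fq}(W)=m$). This follows immediately from rank--nullity applied to $\sigma$ restricted to $\fql$: $\sigma$ maps $\fql$ onto $\fqlt$ with kernel $\fqlt$, so the preimage of $U$ inside $\fql$ already contains $q^{m-\ell/2}\cdot q^{\ell/2}=q^m$ elements, which must therefore exhaust the root set of the degree-$q^m$ polynomial $L_U\circ\sigma$. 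With that observation (which also renders the separability argument optional, since a monic polynomial of degree $q^m$ with $q^m$ distinct roots in $\fql$ is automatically $\prod_{w\in W}(x-w)$), your identification of $\tw$ with the linear coefficient $\tau_U$, and hence the whole proof, goes through.
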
 
\begin{proof} 
Denote by $\si$ the trace function $\trqlqlt$, which is an onto function (see~\cite[Thm. 2.23]{LidlNiederreiter1986}). Then $\si(\bv) = \bv^{q^{\ell/2}}+\bv$. As $\fq$ has characteristic two, $\si(\bv) = \bv^{q^{\ell/2}}-\bv = \prod_{\bbe \in \fqlt}(\bv-\bbe)$. This means that $\si(x)$ is also the subspace polynomial of $\fqlt$. Therefore, $\ker(\si) = \fqlt$.

Our proof consists of two steps. First, based on $U$, 
we construct an $\fq$-subspace $V$ of $\fql$ of dimension $m-\ell/2$ such that $V \cap \fqlt = \{0\}$ and $\si(V) = U$. Then, $W = \fqlt \oplus V$ is the desired subspace. Indeed, we have $\dim_{\fq}(W)=\ell/2 + (m-\ell/2) = m$. Moreover, as
\[
W^* = \fqlt^* \cup \Big(\cup_{\bv \in V^*} \big(v + \fqlt\big)\Big),\vspace{-5pt}
\] 
we have 
\[
\begin{split} 
\tw &= \tau_{\fqlt} \prod_{\bv \in V^*}\Big( \prod_{\bbe \in \fqlt}(\bv+\bbe)\Big) = \prod_{\bv \in V^*}\Big( \prod_{\bbe \in \fqlt}(\bv-\bbe)\Big)\\ 
&= \prod_{\bv \in V^*}\si(\bv) = \prod_{\bu \in U^*}\bu = \tau_U \in \fq. 
\end{split} 
\]
Therefore, $W$ satisfies (P1). Note that since $\fq$ has characteristic two, $\tau_{\fqlt}=-1=1$ and $\bv+\bbe = \bv-\bbe$. We now show that (P2) is satisfied as well. For all $\bal \in \fql$, by the definition of $\si$, we have
\[
\begin{split} 
L_W(\bal) &= \prod_{\bv \in V}\prod_{\bbe \in \fqlt}(\bal-(\bv+\bbe))\\ 
&= \prod_{\bv \in V}\prod_{\bbe \in \fqlt}((\bal-\bv)-\bbe)
= \prod_{\bv \in V}\si(\bal-\bv) \in \fqlt,
\end{split} 
\]
where the last equality is due to the fact that $\im(\si) = \fqlt$. Since $W = \fqlt \oplus V \supset \fqlt$, we have $L_W(x) = H(x)\si(x)$, where $H(x) \in \fql[x]$ and $\si(x)$, as defined earlier, is the subspace polynomial of $\fqlt$. Since $L_W(\bal) \in \fqlt$ and $\ker(\si) = \fqlt$, we deduce that 
\[
L_W(L_W(\bal)) = H(L_W(\bal))\si(L_W(\bal)) = H(L_W(\bal))\times 0 = 0. 
\]
Thus, $W$ satisfies (P2) as well. 

We now discuss the construction of $V$, which has dimension $m-\ell/2$ and satisfies $V \cap \fqlt = \{0\}$ and $\si(V)~=~U$. 
Let $\{\bu_j\}_{j=1}^{m-\ell/2}$ be an $\fq$-basis of $U\subseteq \fqlt$. As $\si$ is onto, there exists $m-\ell/2$ elements $\bv_1,\ldots,\bv_{m-\ell/2}$ satisfying $\si(\bv_j) = \bu_j$ for all $j \in [m-\ell/2]$. We claim that the set $\{\bv_j\}_{j=1}^{m-\ell/2}$ is $\fq$-linearly independent. Indeed, suppose there exist $a_1,\ldots, a_{m-\ell/2}$ in $\fq$ so that $0 = \sum_{j=1}^{m-\ell/2} a_j \bv_j$. Applying $\si$ to both sides of this equation, we obtain $0 = \sum_{j=1}^{m-\ell/2} a_j \bu_j$,
which implies that $a_j = 0$ for all $j \in [m-\ell/2]$. 

Set $V = \spn(\{\bv_j\} _{j=1}^{m-\ell/2})$. Then $\dim_{\fq}(V) = m-\ell/2$ and $\si(V) = U$. Moreover, $\fqlt \cap V = \{0\}$. Indeed, as $\si(V)=U$ and $\dim_{\fq}(V)=\dim_{\fq}(U)$, the only element in $V$ that is mapped to $0$ by $\si$ is $0$, while $\si(\fqlt) = \{0\}$. Hence, $\fqlt \cap V = \{0\}$.   
\end{proof} 

Combining Lemma~\ref{lem:good_tauW} and Lemma~\ref{lem:reduction},  we can show that Construction~II works for even $q$ and $\ell$ and $m \geq \ell/2$. 

\begin{corollary} 
\label{cr:II}
Suppose that $2\mid \ell$, $m \geq \ell/2$, and $q=2^s, s\geq 1$. 
Then there exists an $m$-dimensional $\fq$-subspace $W$ of $\fql$ that satisfies the properties (P1) and (P2) in Theorem~\ref{thm:first_scheme}.
Hence, there exists a distributed scheme repairing two erasures for any $[n,k]$ Reed-Solomon code over $\fql$ with $r\hspace{-2pt} =\hspace{-2pt} n\hspace{-2pt}-\hspace{-2pt}k\hspace{-2pt} \geq\hspace{-2pt} q^m$ using a repair bandwidth of at most $(n-1)(\ell-m)$ subsymbols in $\fq$ per erasure. 
\end{corollary}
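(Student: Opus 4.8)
The plan is to combine the two preceding lemmas mechanically. First I would invoke Lemma~\ref{lem:good_tauW} in the field $\fqlt$ rather than $\fql$: since $2\mid\ell$, $\fqlt$ is a genuine subfield of $\fql$, and $m-\ell/2$ is a nonnegative integer lying in $[1,\ell/2]$ (using $m\ge\ell/2$; the boundary case $m=\ell/2$ needs a separate trivial check, see below). Applying Lemma~\ref{lem:good_tauW} with the ground field $\fq$ and the ambient field $\fqlt$ yields an $\fq$-subspace $U\subseteq\fqlt$ with $\dim_{\fq}(U)=m-\ell/2$ and $\tau_U=\pm1\in\fq$; in characteristic two $\pm1=1$, so in particular $\tau_U\in\fq$, i.e.\ $U$ satisfies (P1).

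Next I would feed this $U$ into the Reduction Lemma (Lemma~\ref{lem:reduction}), whose hypotheses are exactly ``$q,\ell$ even, $m\in[\ell/2+1,\ell]$, and an $\fq$-subspace $U$ of $\fqlt$ of dimension $m-\ell/2$ satisfying (P1).'' The hypothesis $q=2^s$ makes both $q$ and $\ell$ even ($\ell$ even by assumption $2\mid\ell$), so the Reduction Lemma applies and produces an $m$-dimensional $\fq$-subspace $W\subseteq\fql$ satisfying both (P1) and (P2). This is the subspace whose existence is asserted in the first sentence of the corollary. The one caveat is the endpoint $m=\ell/2$, which is not covered by Lemma~\ref{lem:reduction}; here one simply takes $W=\fqlt$ itself, for which $L_W=\si$ is the subspace polynomial of $\fqlt$, so $\tw=\tau_{\fqlt}=-1=1\in\fq$ (P1 holds) and $L_W(\fql)=\fqlt=\ker(L_W)$, whence $L_W^{\otimes2}(\fql)=\{0\}$ (P2 holds).

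For the second assertion, I would observe that the subspace $W$ just produced satisfies precisely the hypotheses (P1) and (P2) of Theorem~\ref{thm:first_scheme}. Theorem~\ref{thm:first_scheme} then immediately gives a distributed scheme repairing two arbitrary erasures for the full-length Reed-Solomon code of length $n=q^\ell$ with $r\ge q^m$, using at most $(n-1)(\ell-m)$ subsymbols in $\fq$ per erasure. To pass from full length to arbitrary $[n,k]$ Reed-Solomon codes with $r=n-k\ge q^m$ and $n\le q^\ell$, note that Construction~II only uses the check polynomials $g_i(x),h_i(x)$ evaluated at the $n$ evaluation points $A\subseteq\fql$, and the entire argument of Theorem~\ref{thm:first_scheme} — the degree bound in Lemma~\ref{lem:constructionII}(a), the repair-equation relations (b)--(c), and the spanning property (d) — depends only on $W$ and on the algebraic identities satisfied by $L_W$, not on $A=\fql$; restricting the sums in the repair equations to $A\setminus\{\bals,\balb\}$ and bounding each column dimension by $\dim_{\fq}(\im(L_W))=\ell-m$ via Lemma~\ref{lem:linearized}(e) gives the bandwidth $(n-1)(\ell-m)$ per erasure for general $n$.

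I do not expect a genuine obstacle here, since the corollary is a pure assembly of Lemmas~\ref{lem:good_tauW} and~\ref{lem:reduction} with Theorem~\ref{thm:first_scheme}; the only point requiring a sentence of care is the index bookkeeping — checking that $m-\ell/2\in[1,\ell]$ so that Lemma~\ref{lem:good_tauW} is applicable over $\fqlt$, and separately handling the excluded endpoint $m=\ell/2$ of the Reduction Lemma by the direct choice $W=\fqlt$. Everything else is immediate from the cited results.
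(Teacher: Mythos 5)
Your proposal is correct and follows essentially the same route as the paper's own proof: the case $m=\ell/2$ is handled by taking $W=\fqlt$ directly, and for $m>\ell/2$ one applies Lemma~\ref{lem:good_tauW} with $\ell$ replaced by $\ell/2$ and $m$ by $m-\ell/2$, feeds the resulting $U$ into the Reduction Lemma, and invokes Theorem~\ref{thm:first_scheme}. Your extra remark justifying the extension from full-length codes to arbitrary $n\le q^\ell$ is a sensible (and slightly more careful) addition that the paper leaves implicit, but it does not change the substance of the argument.
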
 
\begin{proof} 
If $m = \ell/2$ then we set $W = \fqlt$. Then $\tw = -1 \in \fq$. Hence, $W$ satisfies (P1). Moreover, as shown in the first paragraph in the proof of Lemma~\ref{lem:reduction}, $L_W(x) \equiv \sigma(x) = \trqlqlt(x)$. Therefore, $L_W(\fql) = \fqlt = W = \ker(L_W)$. Equivalently, $L_W(L_W(\fql)) = \{0\}$, which shows that $W$ satisfies (P2). 

Now suppose that $m > \ell/2$. 
By Lemma~\ref{lem:good_tauW}, there exists an $(m-\ell/2)$-dimensional $\fq$-subspace $U$ of $\fqlt$ with $\tau_U \in \fq$. Note that here we replace $m$ by $m-\ell/2$ and $\ell$ by $\ell/2$ in Lemma~\ref{lem:good_tauW}. Then by Lemma~\ref{lem:reduction}, there exists an $\fq$-subspace $W$ of $\fql$ satisfying both (P1) and (P2). 
Applying Theorem~\ref{thm:first_scheme} to $W$, we conclude that there exists a repair scheme for the Reed-Solomon code with the desired bandwidth.  
\end{proof} 

We summarize below the steps to construct a subspace $W$ satisfying both (P1) and (P2) which will prove Theorem~\ref{thm:first_scheme} for $m>\ell/2$. Note that when $m = \ell/2$, we set $W = \fqlt$. 
\begin{itemize}
	\item \textbf{Step 1.} Let $U_0$ be an $\frac{m-\ell/2}{s}$-dimensional $\bbF_{q^s}$-subspace of $\fqlt$, where $s = \text{gcd}(m-\ell/2,\ell/2)$. 
	\item \textbf{Step 2.} Compute $\tau_{U_0} \define \prod_{\bu \in U_0 \setminus \{0\}} \bu = \pm \bzt^{x(q^s-1)}$, where $\bzt$ is a primitive element of $\fqlt$ and $x \in \bbZ$.  
	\item \textbf{Step 3.} Set $U = \bga U_0$, where $\bga = \bzt^y$ and $y\in \bbZ$ such that $x(q^s-1)+y(q^{m-\ell/2}-1)=z(q^{\ell/2}-1)$ for some $z \in \bbZ$. 
	\item \textbf{Step 4.} Let $V$ be an $(m-\ell/2)$-dimensional $\fq$-vector space of $\fql$ constructed as follows. For an $\fq$-basis $\{\bu_i\}_{i=1}^{m-\ell/2}$ of $U$, choose a set $\{\bv_i\}_{i=1}^{m-\ell/2}\in \fql$ so that $\trqlqlt(\bv_i)=\bu_i$ for all $i \in [m-\ell/2]$. Set $V = \spn(\{\bv_i\}_{i=1}^{m-\ell/2})$.
	\item \textbf{Step 5.} Set $W = \fqlt \oplus V$.
\end{itemize}

\begin{example} 
\label{ex:construction_II}
Let $q = 2$, $\ell = 16$, $m=10$, and so $n \geq r \geq 2^{10}$. 
Consider an $[n,k]$ Reed-Solomon code $\rsk$ and suppose that we need to repair two codeword symbols $\fas$ and $\fab$, where $\bals \neq \balb \in A \subseteq \ftst$. 

To construct a distributed repair scheme for both $\fas$ and $\fab$, we first need to construct a $10$-dimensional $\ft$-subspace $W$ of $\ftst$, following the five steps described above.
Note that $m-\ell/2 = 2$, $s = \text{gcd}(2,8) = 2$, $(m-\ell/2)/s = 1$, and $q^s-1=3$.  
\begin{itemize}
	\item \textbf{Step 1.} Let $U_0$ be a 1-dimensional $\bbF_4$-subspace of $\fte$. For instance, $U_0 \equiv \bbF_4$. Note that $\dim_{\ft}(U_0)=2$.
	\item \textbf{Step 2.} In this case, $\tau_{U_0} = \bzt^{0} = \bzt^{0\times 3}$, where $\bzt$ is a primitive element of $\fte$, so that $x = 0$.
	\item \textbf{Step 3.} As $y = 0$, we have $U = \bzt^0U_0 = U_0 = \bbF_4 \subset \fte$. 
	\item \textbf{Step 4.} An $\ft$-basis of $U$ is $\{\bu_1,\bu_2\}=\{1, \bzt^{85}\}$. Next, $V=\spnt\big(\{\bv_1,\bv_2\}\big) = \spnt\big(\{\bxi^{31896}, \bxi^{20312}\}\big)\subset \ftst$, which satisfies $\tr_{\ftst/\fte}(\bv_i)=\bu_i$, $i =1,2$. Here $\bxi$ is a primitive element of $\ftst$. Then $V\cap \fte = \{0\}$.
	\item \textbf{Step 5.} Set $W = \fte \oplus V$. An $\ft$-basis of $W$ can be obtained by combining an $\ft$-basis of $V$ and that of $\fte$, e.g. $\{\bzt^i\}_{i=0}^7$. One can verify that $\tw = 1$. Moreover, $L_W(x)= x^{2^{10}}\hspace{-2pt}+\hspace{-2pt}x^{2^8}\hspace{-2pt}+\hspace{-2pt}x^{2^2}\hspace{-2pt}+\hspace{-2pt}x$. We can verify that $\im(L_W)\subseteq W$ by checking that $L_W(\bbi) \hspace{-2pt}\in\hspace{-2pt} W$ for an $\ft$-basis $\{\bbi\}_{i=1}^{16}$ of $\ftst$.   
\end{itemize}

According to Construction~II, the check polynomials used to repair $\fas$ and $\fab$ are, respectively,
\[
\begin{split}
g_i(x)& = L_W(\bbi(x-\bals))/(x-\bals),\quad i \in [\ell],\\
h_i(x) &= L_W(\bbi(x-\balb))/(x-\balb),\quad i \in [\ell],\vspace{10pt}
\end{split} 
\]
where $\{\bbi\}_{i=1}^{10}$ is an $\ft$-basis of $W/(\bals-\balb)$ and $\{\bbi\}_{i=1}^{16}$ is an $\ft$-basis of $\ftst$. In the Download Phase, since every column space has the same dimension as $\im(L_W)$, which is $\ell-m=6$, each RN downloads six repair traces (bits) from each helper node to recover $m=10$ target traces: $\{\tr(\bbi\fas)\}_{i=1}^{10}$ and $\{\tr(\bbi\fab)\}_{i=1}^{10}$, respectively (we henceforth write $\tr$ instead of $\tr_{\ftst/\ft}$ to avoid notational clutter). This reduces to a scenario corresponding to one erasure because $\gix$ does not involve $\balb$ and $\hix$ does not involve $\bals$, or in other words, $\gias=\hiab=0$, for $i \in [10]$. 

In the Collaboration Phase, the two RNs use $\gix$ and $\hix$, respectively, for $i \in [11,16]$. As it now holds that $g_i(\balb) \neq 0$, the RN for $\fab$ must send six repair traces $\{\tr(\giab\fab)\}_{i=11}^{16}$, which it can determine without the complete knowledge of $\fab$. The reason behind this finding is that as $\giab = L_W(\bbi(\balb-\bals))/(\balb-\bals) \in \im(L_W)/(\balb-\bals) \subseteq W/(\balb-\bals) = \spnt(\{\bbo,\ldots,\bbe_{10})$, one can write $\tr(\giab\fab)$ for $i \in [11,16]$ as a linear combination of $\tr(\bbo\fab),\ldots,\tr(\bbe_{10}\fab)$, which are obtained during the Download Phase. As a result, the RN for $\fas$ now has sufficiently many repair traces to also recover $\tr(\bbi\fas)$ for $i \in [11,16]$. Finally, with $16$ target traces available, it can recover $\fas$. A similar argument works for the process of recovering $\fab$. The bandwidth per erasure is $(n-1)(\ell-m)=6(n-1)$~bits.
\end{example}

\subsection{Multi-Round Repair Schemes for Two Erasures}

Note that Property (P2) of Theorem~\ref{thm:first_scheme} implies that $\im(L_W) \subseteq \ker(L_W) = W$, which means that $\ell-m = \dim_{\fq}(\im(L_W)) \leq \dim_{\fq}(\ker(L_W)) = m$, or $m\geq \ell/2$. 

Next, we develop a repair scheme that also applies for the case that $m < \ell/2$. However, in this case the Collaboration Phase must involve multiple rounds of communications. This makes sense intuitively because when $m$ is small compared to $\ell$, the amount of information (i.e., the number of target traces) each RN knows about its erased codeword symbol after the Download Phase is insufficient to help the other RN recover its content in only one round of communication. 

\begin{theorem}
\label{thm:second_scheme}
Consider a Reed-Solomon code of full length $n = q^\ell$ and $r \geq q^m$ over $\fql$. 
Suppose there exists an $\fq$-subspace $W$ of $\fql$ of dimension $m$ satisfying (with $t = \ell \mod m$)
\begin{itemize}
	\item[(P1)] $\tw := \prod_{w \in W \setminus \{0\}} w \in \fq$,
	\item[(P3)] $\im(L_W^{\otimes \lceil \frac{\ell-m}{m} \rceil}) \cap W$ has dimension at least $t$ over $\fq$,
	\item[(P4)] $\im(L_W^{\otimes \lfloor \frac{\ell-m}{m} \rfloor}) \supseteq W$. 
\end{itemize}
Then there exists a scheme that can repair two arbitrary erasures for this code using a bandwidth of at most $(n-1)(\ell-m)$ subsymbols over $\fq$ per erasure.
\end{theorem}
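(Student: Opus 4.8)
The plan is to mirror the two-phase architecture of Theorem~\ref{thm:first_scheme}, but to iterate the Collaboration Phase so that the $\ell-m$ missing target traces are recovered in batches of $m$ (with a final short batch of size $t = \ell \bmod m$), rather than all at once. First I would set up a modified version of Construction~II: choose a subspace $W$ of dimension $m$ satisfying (P1), (P3), (P4), let $\{\bbo,\ldots,\bbm\}$ be an $\fq$-basis of $W/(\balb-\bals)$, extend it to an $\fq$-basis $\{\bbo,\ldots,\bbl\}$ of $\fql$ chosen so that the last $\ell-m$ basis vectors are organized into the nested images $\im(L_W^{\otimes j})/(\balb-\bals)$ for decreasing $j$, and define $g_i(x) = L_W(\bbi(x-\bals))/(x-\bals)$ and $h_i(x) = L_W(\bbi(x-\balb))/(x-\balb)$ as before. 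The analogues of Lemma~\ref{lem:constructionII}(a)--(c) carry over verbatim (degree $q^m-1\le r-1$, the target values equal $\tw\bbi$, and the first $m$ checks involve only one erased symbol), so the Download Phase is identical to that of Theorem~\ref{thm:first_scheme}: each RN spends $(n-2)(\ell-m)$ subsymbols, downloads all repair traces $\{\tr(\gia\fa)\}$ for $\bal\ne\bals,\balb$, and recovers the $m$ target traces $\tr(\tw\bbi\fas)$, $i\in[m]$ (similarly for $\fab$).

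The heart of the argument is the multi-round Collaboration Phase. After the Download Phase, the RN for $\fas$ knows $\tr(\bbi\fas)$ for $i\in[m]$, and symmetrically for $\fab$; the goal is to propagate this to all of $[\ell]$. I would index the batches by $j=1,2,\ldots,\lceil (\ell-m)/m\rceil$: in round $j$ the two RNs exchange the repair traces needed to recover the next block of (up to $m$) target traces, namely those indexed by the basis vectors lying in $\im(L_W^{\otimes j})/(\balb-\bals)$. The key structural fact is that for $i$ in the $j$-th block, $\giab = L_W(\bbi(\balb-\bals))/(\balb-\bals) \in \im(L_W^{\otimes j})/(\balb-\bals)$, and by (P4) (for $j \le \lfloor (\ell-m)/m\rfloor$) this image is contained in $W/(\balb-\bals) = \spn\{\bbo,\ldots,\bbm\}$ — so $\tr(\giab\fab)$ is an $\fq$-linear combination of the target traces $\tr(\bbi\fab)$, $i\in[m]$, which the RN for $\fab$ already has from earlier rounds. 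Hence in round $j$ the RN for $\fab$ can compute and send the $\le m$ repair traces $\{\tr(\giab\fab)\}_{i\in \text{block }j}$, which (combined with the Download-Phase repair traces, which already determine the RHS of the repair equations~\eqref{eq:gl}) let the RN for $\fas$ recover the $j$-th block of its own target traces; symmetrically the RN for $\fas$ feeds the RN for $\fab$. Induction on $j$ shows that after $\lfloor (\ell-m)/m\rfloor$ rounds each RN knows target traces for all indices covered by $\im(L_W^{\otimes \lfloor(\ell-m)/m\rfloor})/(\balb-\bals)$, which by (P4) is all of $W/(\balb-\bals)$ and hence contributes a total of $m\lfloor(\ell-m)/m\rfloor$ traces beyond the first $m$; the remaining $t$ traces are handled by one final round using (P3), which guarantees that $\im(L_W^{\otimes \lceil(\ell-m)/m\rceil})\cap W$ has dimension $\ge t$, so the last $t$ basis vectors can be chosen inside it and the same "expressible as a combination of known target traces" argument applies. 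The total Collaboration bandwidth is $\sum_j(\text{block size}) = \ell-m$ subsymbols per erasure, giving the claimed $(n-2)(\ell-m)+(\ell-m) = (n-1)(\ell-m)$ bound.

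I expect the main obstacle to be making the basis-extension bookkeeping precise: one must verify that the nested images $\im(L_W^{\otimes j})/(\balb-\bals)$ (intersected with $W$ on the last step) really do admit a filtration whose successive quotients have the right dimensions, so that one can pick the $\ell-m$ basis vectors $\bbmpo,\ldots,\bbl$ block by block with block $j$ spanning (a complement inside) $\im(L_W^{\otimes j})/(\balb-\bals)$, and that $\dim_{\fq}(\im(L_W^{\otimes j})) = \max(\ell - jm, 0)$ so the block sizes sum correctly to $\ell-m$ — this is where (P3) and (P4) are doing real work and where a careless count could be off by the boundary term $t$. A secondary subtlety is confirming, exactly as in the one-round proof, that the RHS of each round's repair equations is already determined by the Download-Phase repair traces (so no extra bandwidth is incurred), which follows because those equations only sum $\tr(\gia\fa)$ over $\bal\ne\bals,\balb$; and that (as in Lemma~\ref{lem:constructionII}(b)) the full set $\{g_i(\bals)\}_{i=1}^\ell = \{\tw\bbi\}_{i=1}^\ell$ has $\fq$-rank $\ell$, so that once all $\ell$ target traces are in hand each RN can invert to obtain $\fas$ (resp. $\fab$). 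None of these steps is deep, but the indexing must be set up carefully for the induction to close.
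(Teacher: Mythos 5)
There is a genuine gap: your Collaboration Phase rests on a reversed reading of (P4) and on a basis organization that cannot exist in the multi-round regime. You claim that for $i$ in the $j$-th block, $\giab \in \im(L_W^{\otimes j})/(\balb-\bals)$ and that ``by (P4)'' this image is contained in $W/(\balb-\bals)=\spn\{\bbo,\ldots,\bbm\}$, so that $\tr(\giab\fab)$ is a combination of the $m$ target traces from the Download Phase. But (P4) asserts $\im(L_W^{\otimes \lfloor \frac{\ell-m}{m}\rfloor}) \supseteq W$, not $\subseteq W$; the containment $\im(L_W)\subseteq W$ is exactly property (P2) of the one-round theorem, and it is impossible when $m<\ell/2$ (it would force $\ell-m=\dim_{\fq}\im(L_W)\le\dim_{\fq}\ker(L_W)=m$). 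If your claim were true for all $i\in[m+1,\ell]$, a single round would suffice and no multi-round machinery would be needed. Moreover, your basis bookkeeping is internally inconsistent: in the genuinely multi-round case $\ell>2m$, (P4) gives $W\subseteq\im(L_W^{\otimes\lfloor\frac{\ell-m}{m}\rfloor})\subseteq\im(L_W)$, so the first $m$ vectors (spanning $W/(\balb-\bals)$) and all later blocks (chosen inside $\im(L_W^{\otimes j})/(\balb-\bals)\subseteq\im(L_W)/(\balb-\bals)$) would all lie in a subspace of dimension $\ell-m<\ell$ and could never form an $\fq$-basis of $\fql$. Your auxiliary claim $\dim_{\fq}(\im(L_W^{\otimes j}))=\max(\ell-jm,0)$ is also not automatic.

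The missing idea is the paper's Chain Property, which points the construction in the opposite direction from yours: the \emph{first} $m$ elements $\bga_1,\ldots,\bga_m$ of the basis are chosen inside $W$ \emph{and} inside the deep images ((P3) supplies $t$ independent elements of $\im(L_W^{\otimes\lceil\frac{\ell-m}{m}\rceil})\cap W$, (P4) supplies the rest inside $\im(L_W^{\otimes\lfloor\frac{\ell-m}{m}\rfloor})$), and the remaining $\bga_{m+1},\ldots,\bga_\ell$ are successive $L_W$-preimages, i.e.\ $L_W(\bgai)=\bga_{i-m}$; one then sets $\bbi=\bgai/(\balb-\bals)$. With this choice, $\giab=\hias=\bbe_{i-m}$ exactly, so in round $j$ the repair trace each RN must send is (up to the scalar $\tw\in\fq$ guaranteed by (P1)) literally a target trace the other RN recovered in round $j-1$ --- this is what closes the induction, and it also yields the proof that $\{\bgai\}_{i=1}^\ell$ is a basis (by induction, applying $L_W$ to a putative dependence). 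Your Download Phase, bandwidth accounting, and the observation that the right-hand sides of the round-$j$ repair equations are already determined are all fine, but without the preimage-chain construction the central step of each round is unjustified.
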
  

Construction~III generates special check polynomials that allow a batch-by-batch reconstruction of traces, the exact meaning of which will be made clear in the description of the Collaboration Phase.

\textbf{Construction III.}
Suppose $W$ is an $\fq$-subspace of $\fql$ of dimension $m$ satisfying (P1), (P3), and (P4).
By (P3), we can select a set of $t$ $\fq$-linearly independent elements $\bgao,\ldots,\bgat$ from the intersection $\im(L_W^{\otimes \lceil \frac{\ell-m}{m} \rceil}) \cap W$. Moreover, due to (P4), we can find $\bgatpo,\ldots,\bgam$ in $W \subseteq \im(L_W^{\otimes \lfloor \frac{\ell-m}{m} \rfloor})$, so that $\{\bgao,\ldots,\bgam\}$ forms an $\fq$-basis of $W$.
By definitions of $\bgao,\ldots,\bgam$, there exist $\bgampo,\ldots,\bgal$ satisfying the \textit{Chain Property} defined as follows. 
\begin{itemize}
	\item $\bgaj = L_W(\bga_{m+j}) = L_W^{\otimes 2}(\bga_{2m+j})=\cdots=L_W^{\otimes \lceil \frac{\ell-m}{m}\rceil}(\bga_{\lceil \frac{\ell-m}{m}\rceil m + j})$, for $j \in [t]$.
	\item $\bgaj = L_W(\bga_{m+j}) = L_W^{\otimes 2}(\bga_{2m+j})=\cdots=L_W^{\otimes \lfloor \frac{\ell-m}{m}\rfloor}(\bga_{\lfloor \frac{\ell-m}{m}\rfloor m + j})$, for $j \in [t+1,m]$.
\end{itemize} 
In other words, $\bga_{m+1},\ldots,\bgal$ are chosen so that $L_W(\bgai) = \bga_{i-m}$ for all $i \in [m+1,\ell]$. 
Finally, we set $\bbi = \bgai/(\balb-\bals)$ for $i \in [\ell]$ and choose the two sets of check polynomials as before.
\[
\begin{split}
g_i(x)& = L_W(\bbi(x-\bals))/(x-\bals),\quad i \in [\ell],\\
h_i(x) &= L_W(\bbi(x-\balb))/(x-\balb),\quad i \in [\ell].
\end{split} 
\]
\\
For example, when $\ell=8$, $m = 3$, $t = 2$, we have\vspace{-5pt}
\[
\begin{split}
\bgao &= L_W(\bga_4) = L_W^{\otimes 2}(\bga_7),\\
\bga_2 &= L_W(\bga_5) = L_W^{\otimes 2}(\bga_8),\\
\bga_3 &= L_W(\bga_6).
\end{split}
\]

\begin{lemma} 
\label{lem:constructionIII}
Let $W$, $\bga_i$, $\bbi$, $g_i$, $h_i$ be defined as in Construction~III.
Then the following statements hold.
\begin{itemize}
	\item[(a)]$\deg(g_i(x)) = \deg(h_i(x)) = q^m-1 \leq r-1$, for all $i \in [\ell]$.
	\item[(b)] $g_i(\bals) = h_i(\balb) = \tw\bbi$, for all $i \in [\ell]$.
	\item[(c)] $g_i(\balb) = h_i(\bals) = 0$, for all $i \in [m]$. 
	\item[(d)] $\{\bgai\}_{i=1}^\ell$ and $\{\bbi\}_{i=1}^\ell$ are $\fq$-bases of $\fql$. 
	\item[(e)] $\giab = \hias = \bbe_{i-m}$ for $i \in [m+1,\ell]$.
\end{itemize}
\end{lemma}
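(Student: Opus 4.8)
The plan is to read all five parts off Construction~III directly: parts (a)--(c) will follow the template of the proof of Lemma~\ref{lem:constructionII} almost verbatim, part (e) is an immediate computation from the Chain Property, and the only part that needs real work is (d), the claim that $\{\bgai\}_{i=1}^\ell$ is an $\fq$-basis of $\fql$. Before anything else I would record that Construction~III is well posed. By Lemma~\ref{lem:linearized}~(a), $L_W$ is $\fq$-linear with $\ker(L_W)=W$. Writing $\ell=am+t$ with $0\le t<m$, so that $\lceil(\ell-m)/m\rceil=a$ and $\lfloor(\ell-m)/m\rfloor=a-1$ when $t>0$ (both equal $a-1$ when $t=0$), the preimage chains exist: for each $j\in[t]$ pick, using (P3), an element $\bga_{am+j}$ with $L_W^{\otimes a}(\bga_{am+j})=\bgaj$ and set $\bga_{(a-i)m+j}:=L_W^{\otimes i}(\bga_{am+j})$ for $i=0,1,\dots,a$; for each $j\in[t+1,m]$ do the analogous thing using (P4) with a chain of length $a-1$. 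A short index count shows the indices produced are exactly $[\ell]$, so the $\bgai$ are consistently defined and satisfy $L_W(\bgai)=\bga_{i-m}$ for all $i\in[m+1,\ell]$ and $L_W(\bgai)=0$ for $i\in[m]$ (since $\bgai\in W$ there).

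For (d), suppose $\sum_{i=1}^{\ell}c_i\bgai=0$ with $c_i\in\fq$. Applying $L_W$ and using $\fq$-linearity together with $L_W(\bgai)=0$ for $i\le m$ and $L_W(\bgai)=\bga_{i-m}$ for $i>m$, the relation obtained after $s$ applications is $\sum_{i=1}^{\ell-sm}c_{i+sm}\bgai=0$. I would take $s$ as large as the relation permits, so that the surviving $\bgai$ have index at most $m$ (in fact at most $t$ when $t>0$): $\fq$-independence of $\bgao,\dots,\bgat$ in that case, or of the $\fq$-basis $\bgao,\dots,\bgam$ of $W$ when $t=0$, forces the top block of coefficients to vanish. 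Descending through the earlier stages, at each stage the terms of index greater than $m$ carry coefficients already shown to be zero, so the relation collapses to a dependence among $\bgao,\dots,\bgam$, and since these form an $\fq$-basis of $W$ another block $c_{sm+1},\dots,c_{sm+m}$ is killed; after the stage $s=0$ all $c_i=0$. Thus $\{\bgai\}_{i=1}^\ell$ is an $\fq$-independent set of cardinality $\ell=\dim_{\fq}\fql$, hence an $\fq$-basis; and since multiplication by $(\balb-\bals)^{-1}$ is an $\fq$-linear bijection of $\fql$, $\{\bbi\}_{i=1}^\ell$ is an $\fq$-basis as well.

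Parts (a)--(c) then go through as in Lemma~\ref{lem:constructionII}. Since $|W|=q^m$, $L_W$ has degree $q^m$ and $L_W(0)=0$, so $L_W(\bbi(x-\bals))$ is divisible by $x-\bals$ with quotient of degree $q^m-1$ and leading coefficient $\bbi^{q^m}\ne 0$ (by (d), $\bbi\ne 0$); hence $\deg g_i=\deg h_i=q^m-1\le r-1$, giving (a). Writing $L_W(x)=\tw x+x^2M(x)$ (legitimate since $(-1)^{q^m-1}=1$) and substituting $x\mapsto\bbi(x-\bals)$ gives $g_i(x)=\tw\bbi+\bbi^2(x-\bals)M(\bbi(x-\bals))$, so $g_i(\bals)=\tw\bbi$, and symmetrically $h_i(\balb)=\tw\bbi$, which is (b). For $i\in[m]$, $\bbi(\balb-\bals)=\bgai\in W=\ker(L_W)$, so $g_i(\balb)=L_W(\bgai)/(\balb-\bals)=0$ and likewise $h_i(\bals)=0$, which is (c). Finally, for (e) and $i\in[m+1,\ell]$, again $\bbi(\balb-\bals)=\bgai$, so $\giab=L_W(\bgai)/(\balb-\bals)=\bga_{i-m}/(\balb-\bals)=\bbe_{i-m}$ by the Chain Property, and $\hias=L_W(-\bgai)/(\bals-\balb)=-\bga_{i-m}/(\bals-\balb)=\bbe_{i-m}$ using $\fq$-linearity of $L_W$.

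The step I expect to be the real obstacle is (d): parts (a)--(c) and (e) are mechanical once the $\bgai$ are in hand, whereas proving that the tower of preimages supplied by (P3)--(P4) is genuinely a \emph{basis} of $\fql$ requires the descending induction above, and the index bookkeeping must be handled with care around the boundary case $m\mid\ell$, where $t=0$ and the ceiling and floor in (P3)--(P4) coincide.
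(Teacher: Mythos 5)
Your proposal is correct and follows essentially the same route as the paper: (a)--(c) and (e) are the same computations (inherited from Lemma~\ref{lem:constructionII} and the Chain Property), and your proof of (d) rests on the same mechanism the paper uses --- apply $L_W$ to a hypothetical $\fq$-dependence, use $L_W(\bgai)=0$ for $i\le m$ and $L_W(\bgai)=\bga_{i-m}$ for $i>m$ to push indices down, and invoke the independence of $\bgao,\ldots,\bgam$ --- with the paper merely organizing this as a one-element-at-a-time ascending induction rather than your top-down block elimination. Your preliminary verification that the preimage chains in Construction~III are well defined (including the boundary case $t=0$) is a small extra the paper leaves implicit.
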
 
\begin{proof}[Proof of Lemma~\ref{lem:constructionIII}]
The first three statements follow in the same way as those in the proof of Lemma~\ref{lem:constructionII}. For (d) to hold, it suffices to show that the set $\{\bgai\}_{i=1}^\ell$ is $\fq$-linearly independent. We prove this by induction. 

First, $\bgao,\ldots,\bgam$ are $\fq$-linearly independent by definition. Suppose that $\bgao,\ldots,\bga_s$, where $s \in [m,\ell-1]$, are linearly independent. We aim to show that $\bgao,\ldots,\bga_{s+1}$ are also $\fq$-linearly independent. 
Assume that we can write $\bga_{s+1} = \sum_{i=1}^s a_i \bga_i$, for some $a_i \in \fq$. Applying $L_W$ to both sides of this equation and noting that $L_W$ is $\fq$-linear, we have
\[
L_W(\bga_{s+1}) - \sum_{i=1}^s a_i L_W(\bga_i) = 0.
\]  
Note that $L_W(\bgai)=0$ for $i \in [m]$ as such $\bgai$ belongs to $W$ and moreover, $L_W(\bgai) = \bga_{i-m}$ for $i \in [m+1,\ell]$. Hence, the above equation implies that there exists a nontrivial $\fq$-linear combination of $\bgao,\ldots,\bga_s$ equal to zero, which contradicts our induction hypothesis. Therefore, $\bgao,\ldots,\bga_{s+1}$ must also be $\fq$-linearly independent. 

Finally, to prove (e), again using the fact that $L_W(\bgai) = \bga_{i-m}$, for $i \in [m+1,\ell]$, we have
\[
\gi(\balb)=\frac{L_W(\bbi(\balb-\bals))}{\balb-\bals} = \frac{L_W(\bgai)}{\balb-\bals}=\frac{\bga_{i-m}}{\balb-\bals} = \bbe_{i-m}. 
\] 
Similarly, 
\[
\hi(\bals)=\frac{L_W(\bbi(\bals-\balb))}{\bals-\balb} = \frac{L_W(-\bgai)}{\bals-\balb}=\frac{\bga_{i-m}}{\balb-\bals} = \bbe_{i-m}. 
\] 
This completes the proof of Lemma~\ref{lem:constructionIII}. 
\end{proof} 

\begin{figure}[htb]
\centering
\includegraphics[scale=0.8]{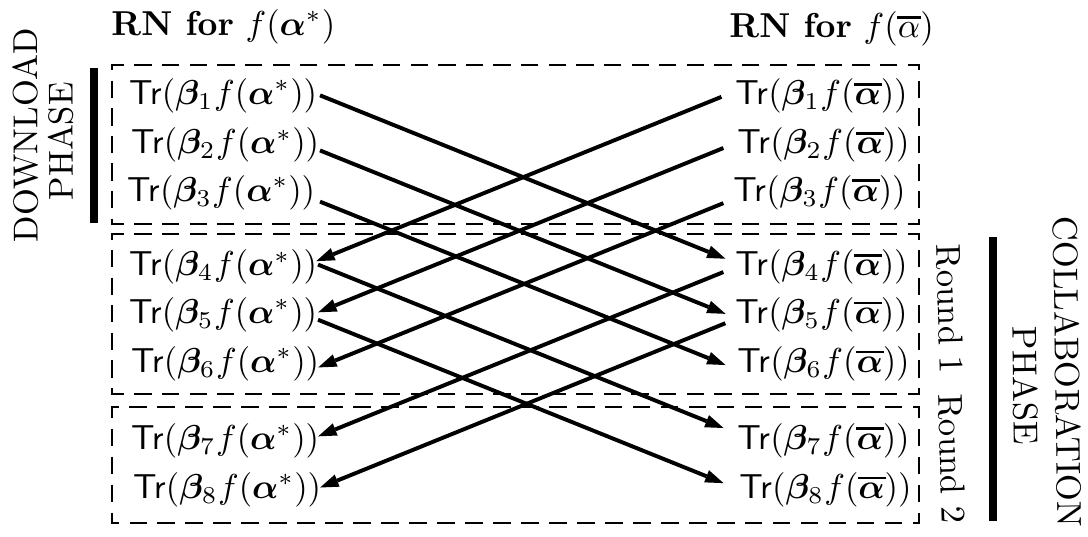}
\caption{Illustration of the process of repairing traces batch-by-batch using check polynomials from Construction~III. An arrow means the target trace available at an RN is then used as a repair trace to reconstruct a target trace at the other RN ($\tw$ is ignored).}
\label{fig:batch}
\end{figure}

\begin{proof}[Proof of Theorem~\ref{thm:second_scheme}]
We use a two-phase repair scheme based on the check polynomials produced by Construction~III that can repair $\fas$ and $\fab$ with a bandwidth of $(n-1)(\ell-m)$ subsymbols per erasure. The Download Phase is the same as that in the proof of Theorem~\ref{thm:first_scheme} and is hence omitted. 
We note that when the Download Phase is completed, 
\begin{itemize}
	\item the RN for $\fas$ has obtained $m$ \textit{target} traces of $\fas$: $\tr\big(\gias\fas\big)=\tr\big(\tw\bbi\fas\big)$, $i \in [m]$, 
	\item the RN for $\fab$ has obtained $m$ \textit{target} traces of $\fab$: $\tr\big(\hiab\fab\big)=\tr\big(\tw\bbi\fab\big)$, $i \in [m]$.
\end{itemize}
Each RN has used a bandwidth of $(n-2)(\ell-m)$ subsymbols. 

The Collaboration Phase consists of $\lceil \frac{\ell-m}{m}\rceil$ rounds. 
In the first round, by Lemma~\ref{lem:constructionIII}~(e), based on the target traces obtained in the Download Phase, the two RNs can construct and exchange the following $m$ repair traces $(i \in [m+1,2m])$ 
\[
\begin{split}
\tr(\giab\fab)\hspace{-2pt}&=\hspace{-2pt}\tr(\bbe_{i-m}\fab)\hspace{-2pt}=\hspace{-2pt}\tr(h_{i-m}(\balb)\fab)/\tw\\
\tr(\hias\fas)\hspace{-2pt}&=\hspace{-2pt}\tr(\bbe_{i-m}\fas)\hspace{-2pt}=\hspace{-2pt}\tr(g_{i-m}(\bals)\fas)/\tw,
\end{split}\vspace{5pt}
\] 
which subsequently enable them to determine the target traces $\tr(\bbi\fas)$ and $\tr(\bbi\fab)$, $i\in[m+1,2m]$, respectively, using the corresponding repair equations. 
Subsequent rounds are carried out in a similar manner, each of which allows each RN to construct and exchange $m$ repair traces based on the target traces recovered in the previous round. These repair traces in turn will allow the RNs to recover a batch of new $m$ target traces. An exception is when $t = \ell \mod m > 0$, as the batch of traces recovered in the \textit{last round} consists of $t$ traces instead of $m$. 
\end{proof} 

Continuing the example for $\ell=8, m=3$, and $t = 2$, the traces repaired in the two phases are illustrated in Fig.~\ref{fig:batch}. 
We now describe two sets of parameters $\ell$ and $m$ for which Construction~III is feasible. 

\begin{corollary} 
\label{cr:III_1}
Suppose that $\ell/m\in \bbZ$ is a power of $q$. 
Then there exists an $m$-dimensional $\fq$-subspace $W$ of $\fql$ that satisfies the properties (P1), (P3), and (P4) in Theorem~\ref{thm:second_scheme}. Moreover, $L_W(x) = 
x^{q^m}-x$. 
Hence, there exists a scheme repairing two erasures for full-length Reed-Solomon code over $\fql$ with $r \geq q^m$ when $\ell/m$ is a power of $q$. The required bandwidth is at most $(n-1)(\ell-m)$ subsymbols per erasure. 
\end{corollary}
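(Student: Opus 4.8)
The plan is to exhibit the required subspace explicitly as $W = \fqm$, the unique subfield of $\fql$ of order $q^m$, which exists precisely because $m \mid \ell$ (a fortiori, since $\ell/m$ is a power of $q$). The subspace polynomial of a subfield is $L_W(x) = \prod_{\bw \in \fqm}(x - \bw) = x^{q^m} - x$, which settles the ``moreover'' assertion, and Property~(P1) is immediate: $\tw = \prod_{\bw \in \fqm \setminus \{0\}}\bw = -1 \in \fq$, the finite-field analogue of Wilson's theorem (the product of all nonzero elements of $\fqm$ is its unique element of order $2$, or equals $1 = -1$ when $q^m = 2$).

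The technical heart is to compute the iterated images of $L_W$. Write $d \define \ell/m$, a power of $q$ by hypothesis, hence a power of the characteristic $p$. Since $L_W(x) = x^{q^m}-x$ has coefficients in $\fq$, its conventional $q$-associate is $l(x) = x^m - 1$, so by the fact recalled in Section~\ref{sec:pre} the associate of $L_W^{\otimes d}$ is $l(x)^d = (x^m-1)^{d} = x^{md}-1 = x^\ell - 1$, the middle equality using that $d$ is a power of $p$. Therefore $L_W^{\otimes d}(x) = x^{q^\ell}-x$, i.e.\ $L_W^{\otimes d}(\fql) = \{0\}$: the polynomial $L_W$ is $d$-nilpotent. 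By associativity of symbolic composition, $L_W^{\otimes d}(\bal) = L_W\big(L_W^{\otimes(d-1)}(\bal)\big)$ for every $\bal \in \fql$, so $\im\big(L_W^{\otimes(d-1)}\big) \subseteq \ker(L_W) = W$. Conversely, $L_W^{\otimes(d-1)}$ is a nonzero linearized polynomial of $q$-degree $m(d-1) = \ell - m$, so it has at most $q^{\ell-m}$ roots in $\fql$; hence $\dim_{\fq}\ker\big(L_W^{\otimes(d-1)}\big) \le \ell - m$ and, by rank--nullity, $\dim_{\fq}\im\big(L_W^{\otimes(d-1)}\big) \ge m = \dim_{\fq} W$. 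Combining the two inclusions gives $\im\big(L_W^{\otimes(d-1)}\big) = W$.

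Properties~(P3) and (P4) now follow mechanically. Because $m \mid \ell$ we have $t = \ell \bmod m = 0$, and both $\lceil (\ell-m)/m \rceil$ and $\lfloor (\ell-m)/m \rfloor$ equal $d-1$ (as $m \mid \ell - m$). Thus $\im\big(L_W^{\otimes (d-1)}\big) \cap W = W$ has $\fq$-dimension $m \ge 0 = t$, which is (P3), and $\im\big(L_W^{\otimes(d-1)}\big) = W \supseteq W$, which is (P4). Invoking Theorem~\ref{thm:second_scheme} with this $W$ produces the advertised distributed scheme repairing two erasures for the full-length Reed-Solomon code over $\fql$ with $r \ge q^m$, at a cost of at most $(n-1)(\ell-m)$ subsymbols over $\fq$ per erasure.

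I expect the only genuinely load-bearing step to be the nilpotency $L_W^{\otimes d}(\fql) = \{0\}$, for it is exactly here that the hypothesis ``$\ell/m$ is a power of $q$'' is used, via the characteristic-$p$ collapse $(x^m-1)^{d} = x^{md}-1$; without it, $(x^m - 1)^{\ell/m}$ carries cross terms, $L_W$ need not be nilpotent, and the clean equality $\im(L_W^{\otimes(d-1)}) = W$ underpinning (P3) and (P4) breaks down.
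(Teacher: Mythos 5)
Your proof is correct, and it follows the paper's construction in all essentials: the same choice $W=\fqm$, the same subspace polynomial $L_W(x)=x^{q^m}-x$ with $\tw=-1$ for (P1), the same observation that $t=\ell\bmod m=0$ trivializes (P3), and the same key mechanism of passing to the conventional $q$-associate $x^m-1$ and exploiting the characteristic-$p$ collapse enabled by the hypothesis that $\ell/m$ is a power of $q$. The only divergence is the finishing step for (P4): the paper computes $l^{\,\ell/m-1}(x)=(x^\ell-1)/(x^m-1)=\sum_{i=0}^{\ell/m-1}x^{mi}$ explicitly, recognizes $L_W^{\otimes(\ell-m)/m}$ as the relative trace $\tr_{\fql/\fqm}$, and concludes $\im\big(L_W^{\otimes(\ell-m)/m}\big)=\fqm=W$ from surjectivity of the trace; you instead raise the power by one to get $l^{\,\ell/m}(x)=x^\ell-1$, deduce the nilpotency $L_W^{\otimes \ell/m}(\fql)=\{0\}$, hence $\im\big(L_W^{\otimes(\ell/m-1)}\big)\subseteq\ker(L_W)=W$, and force equality by a root-count/rank--nullity bound $\dim_{\fq}\im\big(L_W^{\otimes(\ell/m-1)}\big)\ge \ell-(\ell-m)=m$. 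Both routes are sound and establish the same stronger fact $\im\big(L_W^{\otimes(\ell/m-1)}\big)=W$; yours is marginally more generic in that it never needs to identify the iterate as a trace map, while the paper's identification yields the explicit closed form of the iterate (the trace down to $\fqm$), which is also the structural picture used elsewhere in the paper (e.g., in Corollary~\ref{cr:III_2} via Lemma~\ref{lem:composition_trace}).
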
 
\begin{proof} 
Set $W = \fqm$. Since $m \mid \ell$, $W$ is an $\fq$-subspace of $\fql$ and has dimension $m$ over $\fq$. Then $L_W(x) = x^{q^m}-x$ (Fermat's theorem, see, for example,~\cite[Ch. 4, Cor. 3]{MW_S}). Hence, $\tw = -1 \in \fq$, i.e., $W$ satisfies (P1). 
Since $t = \ell \mod m = 0$, (P3) is trivially satisfied. It remains to show that (P4) holds.
As introduced in Section~\ref{sec:pre}, let $l(x) = x^m-1$ be the associate of $L_W(x)$. Then $l^{ \frac{\ell-m}{m}}(x)$ is the associate of $L_W^{\otimes \frac{\ell-m}{m} }(x)$.
Using the assumption that $\ell/m$ is a power of $q$, we have 
\[
\begin{split}
l^{ \frac{\ell-m}{m} }(x)
 &= (x^m-1)^{\ell/m-1} = \frac{(x^m-1)^{\ell/m}}{x^m-1}
 = \frac{x^{\ell}-1}{x^m-1}\\ &= \sum_{i=0}^{\ell/m-1}x^{mi}. 
 \end{split}\vspace{-5pt}
\]
Therefore, \vspace{-5pt}
\[
L_W^{\otimes \frac{\ell-m}{m} }(x)
= \sum_{i=0}^{\ell/m-1}x^{q^{mi}}
= \sum_{i=0}^{\ell/m-1}x^{(q^m)^i} = \tr_{\fql/\fqm}(x). 
\]
As the trace function is onto (see, for instance,~\cite[Thm. 2.23]{LidlNiederreiter1986}), 
we have $\im\big(L_W^{\otimes \frac{\ell-m}{m} }\big) = \fqm = W$. Thus, (P4) is satisfied. 
\end{proof} 
	
\begin{corollary} 
\label{cr:III_2}
Suppose that $\ell=q^a$ and $m=q^b-1>1$ for some $a \geq b \geq 1$. 
Then there exists an $m$-dimensional $\fq$-subspace $W$ of $\fql$ that satisfies the properties (P1), (P3), and (P4) in Theorem~\ref{thm:second_scheme}. Moreover, $L_W(x) = 
\tr_{\fqmpo/\fq}(x) = \sum_{i = 0}^{m}x^{q^i}$. 
Hence, there exists a scheme repairing two erasures for full-length Reed-Solomon code over $\fql$ with $r \geq q^m$ when $\ell=q^a$ and $m=q^b-1$ for some $1 \leq b \leq a$. The required bandwidth is at most $(n-1)(\ell-m)$ subsymbols per erasure. 
\end{corollary}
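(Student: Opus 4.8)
The plan is to exhibit the subspace $W$ explicitly as the kernel of the trace map $\tr_{\fqmpo/\fq}$ and then verify the three properties (P1), (P3), (P4) using the conventional $q$-associate machinery introduced in Section~\ref{sec:pre}. First I would observe that since $m+1 = q^b$ divides $\ell = q^a$ (as $b \le a$), the field $\fqmpo$ is a subfield of $\fql$, so $\tr_{\fqmpo/\fq}$ makes sense and its kernel $W$ is an $m$-dimensional $\fq$-subspace of $\fqmpo \subseteq \fql$. The subspace polynomial of $W$ is then precisely $L_W(x) = \prod_{w\in W}(x-w) = \tr_{\fqmpo/\fq}(x) = \sum_{i=0}^m x^{q^i}$, because a linearized polynomial of degree $q^m$ with root set exactly $W$ must equal the (monic, degree-$q^m$) subspace polynomial of $W$, and the trace is exactly that. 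This immediately gives the stated formula for $L_W$.

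Next I would verify (P1): $\tw = \prod_{w\in W\setminus\{0\}} w$ is (up to sign) the lowest-degree nonzero coefficient of $L_W(x)$, namely the coefficient of $x^q$ divided by that of $x^{q^0}=x$ — wait, more carefully, writing $L_W(x) = \tw x + (\text{higher powers of }x)$ as in the proof of Lemma~\ref{lem:linearized}(c), the constant-slope coefficient $\tw$ is the coefficient of $x$ in $\sum_{i=0}^m x^{q^i}$, which is $1$. Hence $\tw = 1 \in \fq$, so (P1) holds. For (P3) and (P4) I would pass to the conventional $q$-associate $l(x) = \sum_{i=0}^m x^i = (x^{m+1}-1)/(x-1) \in \fq[x]$ of $L_W(x)$, and use the fact (quoted from \cite[Lem.~3.59]{LidlNiederreiter1986}) that the associate of $L_W^{\otimes s}$ is $l^s$. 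With $\ell = q^a$ and $m+1 = q^b$, one computes $\lceil (\ell-m)/m \rceil$ and $\lfloor (\ell-m)/m \rfloor$ and analyzes $l(x)^s \bmod (x^\ell - 1)$; the key arithmetic fact is that $(x^{m+1}-1)$ and hence powers of $l(x) = (x^{m+1}-1)/(x-1)$ interact cleanly with $x^{q^a}-1 = (x-1)^{q^a}$ in characteristic $p \mid q$, so that $L_W^{\otimes s}(x)$ reduces modulo $x^\ell - x$ to an explicit linearized polynomial whose image one can read off. From that explicit form, (P4) becomes the statement that $\im(L_W^{\otimes \lfloor(\ell-m)/m\rfloor}) \supseteq W$, i.e.\ that $\fqmpo \cap \fql = \fqmpo$ is in the image, and (P3) becomes a dimension count on $\im(L_W^{\otimes \lceil(\ell-m)/m\rceil}) \cap W$ giving at least $t = \ell \bmod m$.

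Then I would invoke Theorem~\ref{thm:second_scheme}: since $W$ satisfies (P1), (P3), (P4), Construction~III produces check polynomials $\{g_i\}$, $\{h_i\}$ yielding a two-phase (multi-round) repair scheme for two erasures with bandwidth at most $(n-1)(\ell-m)$ subsymbols over $\fq$ per erasure, for any full-length Reed-Solomon code over $\fql$ with $r \ge q^m$. That completes the proof.

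The main obstacle I anticipate is the $q$-associate computation for (P3) and (P4): one must carefully track $l(x)^s$ modulo $x^\ell - 1$ where $\ell = q^a$, $m = q^b - 1$, and show the image of the iterated symbolic power contains $W$ (for the $\lfloor\cdot\rfloor$ exponent) while also controlling the intersection with $W$ for the $\lceil\cdot\rceil$ exponent. The subtlety is that $\ell - m = q^a - q^b + 1$ is not generally a multiple of $m = q^b-1$, so $\lceil(\ell-m)/m\rceil$ and $\lfloor(\ell-m)/m\rfloor$ genuinely differ, and one needs the "leftover" of size $t = \ell \bmod m$ to land inside $W$ after one extra application of $L_W$ — establishing exactly that containment via the polynomial identity is the crux. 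The degenerate case $b = a$ (so $\ell = q^a$, $m = q^a - 1$, $\ell - m = 1 < m$, $\lfloor(\ell-m)/m\rfloor = 0$) should be handled separately or noted to be consistent, since there $L_W^{\otimes 0}(x) = x$ and $\im = \fql \supseteq W$ trivially gives (P4), while (P3) needs $\im(L_W) \cap W$ to have dimension $\ge t = \ell \bmod m = 1$.
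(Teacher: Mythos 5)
Your choice of $W$ (the kernel of $\tr_{\fqmpo/\fq}$ inside $\fqmpo\subseteq\fql$), the identification $L_W(x)=\tr_{\fqmpo/\fq}(x)=\sum_{i=0}^m x^{q^i}$, and the verification of (P1) via the coefficient of $x$ all coincide with the paper's argument. But the proposal stops exactly where the real work begins: (P3) and (P4) are only announced as ``analyze $l(x)^s$ and read off the image,'' and you yourself flag this as the crux you have not done. That is a genuine gap, because the whole content of the corollary is those two properties. The paper's route is: (i) an exact identity of associates (no reduction mod $x^\ell-1$ is needed, since $\deg l^{\lceil(\ell-m)/m\rceil}=\ell-t<\ell$), namely $p_1^{\lceil(\ell-m)/m\rceil}(x)=\sum_{i=0}^{\ell/t-1}x^{it}$ with $p_1(x)=\sum_{i=0}^m x^i$, proved by writing the exponent as $q^c\sum_{j=0}^{(a-c)/b-1}q^{bj}$ (where $c=a\bmod b$, $t=q^c$) and using base-$q^b$ digit expansions; this gives $L_W^{\otimes\lceil(\ell-m)/m\rceil}=\tr_{\fql/\fqt}$, hence $\im\big(L_W^{\otimes\lceil(\ell-m)/m\rceil}\big)=\fqt$; (ii) (P3) then needs $\fqt\subseteq W$, which is the divisibility of $\sum_{i=0}^m x^i$ by $x^t-1$, true because $(m+1)/t=q^{b-c}\equiv 0$ in $\fq$; (iii) (P4) is \emph{not} obtained by computing $\im\big(L_W^{\otimes\lfloor(\ell-m)/m\rfloor}\big)$ explicitly (it is not a clean trace in general); instead one lower-bounds its dimension by $t+m$ via $\deg L_W^{\otimes\lfloor(\ell-m)/m\rfloor}=q^{\ell-t-m}$, notes $L_W(K)=\fqt$ for $K=\im\big(L_W^{\otimes\lfloor(\ell-m)/m\rfloor}\big)$ by (i), and applies rank--nullity to $L_W$ restricted to $K$ to conclude $\ker_K(L_W)$ has dimension at least $m$, hence equals $W$, hence $W\subseteq K$. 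None of these three steps appears in your proposal, and your plan of ``reading off'' the image of the $\lfloor\cdot\rfloor$-power would be substantially harder than the dimension argument actually used.

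Two smaller inaccuracies: (P4) is the containment $W\subseteq\im\big(L_W^{\otimes\lfloor(\ell-m)/m\rfloor}\big)$, not the statement that $\fqmpo$ lies in the image ($W$ is the trace-kernel hyperplane of $\fqmpo$, of dimension $m$, and the paper never shows, nor needs, $\fqmpo\subseteq K$). Also, for (P3) it is not a bare ``dimension count'': one must actually identify the image as $\fqt$ and prove $\fqt\subseteq W$; the edge case $b=a$ (so $t=1$) that worries you is handled uniformly by the same divisibility argument, since then $(m+1)/t=q^b\equiv 0$ in $\fq$. So the skeleton and the choice of $W$ match the paper, but the proof of (P3) and (P4) --- the heart of the corollary --- is missing.
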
 

We need an auxiliary result for the proof of Corollary~\ref{cr:III_2}. 

\begin{lemma} 
\label{lem:composition_trace}
Suppose that $\ell=q^a$ and $m=q^b-1>1$ for some $a \geq b \geq 1$. Let $t = \ell \mod m$. Then
\begin{equation} 
\label{eq:trace_exp}
\tr_{\fqmpo/\fq}^{\otimes \lceil \frac{\ell-m}{m} \rceil}(x) = \tr_{\fql/\fqt}(x).
\end{equation} 
\end{lemma}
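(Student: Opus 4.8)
The plan is to work at the level of conventional $q$-associates (also called linearized associates), which turn symbolic composition into ordinary polynomial multiplication. Recall that if $L_W(x) = \tr_{\fqmpo/\fq}(x) = \sum_{i=0}^m x^{q^i}$, then its associate is $l(x) = \sum_{i=0}^m x^i = (x^{m+1}-1)/(x-1)$, and the associate of $L_W^{\otimes s}(x)$ is $l(x)^s$. Likewise, the claimed right-hand side $\tr_{\fql/\fqt}(x)$ has associate $(x^{\ell}-1)/(x^t-1) = \sum_{j=0}^{\ell/t - 1} x^{tj}$ (note $t = \gcd(\ell,m)$ since $\gcd(q^a, q^b-1)=\gcd(q^a \bmod (q^b-1),\, q^b-1)$ and one checks $t = q^{a \bmod b} - 1$, so $t\mid \ell$ — I will need to verify this small number-theoretic fact, that $t\mid \ell$, using $\ell = q^a$ and standard gcd identities for $q^a-1$ and $q^b-1$). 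So the identity \eqref{eq:trace_exp} reduces to the purely polynomial congruence
\[
\Big(\tfrac{x^{m+1}-1}{x-1}\Big)^{\lceil (\ell-m)/m\rceil} \equiv \tfrac{x^{\ell}-1}{x^{t}-1} \pmod{x^{\ell}-1},
\]
because $L^{\otimes s}(x) \equiv L'^{\otimes s'}(x) \pmod{x^{q^\ell}-x}$ on $\fql$ iff the associates agree mod $x^\ell - 1$.

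Next I would compute $\lceil (\ell-m)/m\rceil$ explicitly. Writing $\ell = q^a$ and $m = q^b-1$, we have $\ell = m\cdot Q + t$ where $Q = \lfloor \ell/m\rfloor$ and $0 < t < m$; hence $\lceil(\ell-m)/m\rceil = \lceil \ell/m \rceil - 1 = Q$ (since $t>0$). So the exponent is exactly $Q = \lfloor \ell/m \rfloor = (\ell - t)/m$. The target is then to show
\[
(1 + x + \cdots + x^{m})^{Q} \equiv 1 + x^{t} + x^{2t} + \cdots + x^{\ell - t} \pmod{x^{\ell}-1}.
\]
I would prove this by the following telescoping/base-$ (m+1)$ argument: since $m+1 = q^b$ and $\ell = q^a = (q^b)^{a/b}$ when $b\mid a$ — but $b$ need not divide $a$, so more carefully I would use the identity $(1+x+\cdots+x^{q^b-1})$ is the $q$-associate of $\tr_{\fqmpo/\fq}$, and iterate the composition $\tr_{\fqmpo/\fq}$ with itself, tracking at each step which field the iterated trace maps onto. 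Concretely: $\tr_{\fqmpo/\fq}^{\otimes 2}$ has associate $(1+\cdots+x^m)^2$; modulo $x^\ell-1$ one checks this collapses using $q^{2b} \equiv q^{?}$ patterns. The cleanest route is induction on the number of compositions, using the field-tower identity $\tr_{\fqmpo/\fq} \circ (\text{onto map onto } \fqmpo) = \tr/(\text{bigger tower})$, i.e. repeatedly apply the transitivity of trace together with the fact (from Corollary~\ref{cr:III_1}'s proof technique) that compositions of the "all-ones associate" telescope as geometric sums of geometric sums.

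The main obstacle I anticipate is the bookkeeping of the iterated composition when $b \nmid a$: the naive telescoping $(x^{m+1}-1)/(x-1)$ raised to a power only simplifies cleanly to $(x^\ell-1)/(x^t-1)$ after one accounts for the "carries" in base $m+1 = q^b$ representation of $\ell = q^a$, and one must confirm that exactly $Q = \lfloor \ell/m\rfloor$ compositions suffice (no more, no fewer) and that the leftover exponent is precisely $t = \gcd(\ell, m)$. I would handle this by first establishing, as a lemma, the number-theoretic facts $t = q^{a\bmod b}-1$, $t \mid \ell$, and $\gcd(m,\ell) = t$, and then verifying the polynomial identity by multiplying both sides by $x^t - 1$ and $x-1$ and checking the resulting identity $(x^{m+1}-1)^{Q}(x^t-1) \equiv (x^\ell-1)(x-1)^{Q-1}(\text{something}) \pmod{x^\ell-1}$ degree-by-degree, or — more elegantly — by evaluating both associates at all $\ell$-th roots of unity in an algebraic closure of $\fq$ and checking equality there, since a polynomial mod $x^\ell-1$ is determined by its values on the $\ell$-th roots of unity (this works because $\gcd(\ell, \text{char})$ may not be $1$, so I would instead argue via the CRT decomposition of $\fq[x]/(x^\ell-1)$ into local rings if $p\mid\ell$, but here $\ell = q^a$ so $p\mid \ell$ always — hence the roots-of-unity evaluation must be replaced by the direct polynomial manipulation, which is the genuinely delicate part and where I would spend the most care).
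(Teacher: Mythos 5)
Your setup---passing to conventional $q$-associates, computing $\lceil(\ell-m)/m\rceil=(\ell-t)/m$, and reducing the lemma to the polynomial identity $\big(\sum_{i=0}^{m}x^i\big)^{(\ell-t)/m}=\sum_{j=0}^{\ell/t-1}x^{tj}$---matches the first half of the paper's argument, but two problems remain. First, your number theory is off: with $\ell=q^a$ and $m=q^b-1$ one has $t=\ell \bmod m=q^{c}$ where $c=a\bmod b$ (because $q^b\equiv 1 \pmod{q^b-1}$), not $q^{a\bmod b}-1$, and certainly not $\gcd(\ell,m)$, which equals $1$ since $q^b-1$ is coprime to $q$. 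The fact you actually need, $t\mid\ell$, holds precisely because $t$ is a power of $q$, and that $t$ is a $q$-power is also indispensable for the final step below, so the wrong formula is not harmless.

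Second, and more seriously, you never supply the core argument: you explicitly defer ``the genuinely delicate part,'' and your proposed fallbacks do not work (as you note, evaluation at $\ell$-th roots of unity is unavailable since the characteristic divides $\ell$, and ``checking degree-by-degree'' is not an argument). The missing idea, which is how the paper finishes, is that the exponent is a sum of $q$-powers,
\[
\frac{\ell-t}{m}=\frac{q^c\big(q^{a-c}-1\big)}{q^b-1}=q^c\sum_{j=0}^{\frac{a-c}{b}-1}q^{bj},
\]
so by the Frobenius endomorphism of $\fq[x]$,
\[
\Big(\sum_{i=0}^{m}x^i\Big)^{\frac{\ell-t}{m}}
=\bigg(\prod_{j=0}^{\frac{a-c}{b}-1}\ \sum_{i=0}^{m}x^{iq^{bj}}\bigg)^{q^c},
\]
and expanding this product is \emph{carry-free}: the digits $i_j$ range over $[0,q^b-1]=[0,m]$, so the exponents $\sum_j i_j q^{bj}$ run exactly once over all integers in $[0,q^{a-c}-1]$ by uniqueness of base-$q^b$ representation; hence no exponent collisions occur, no characteristic-$p$ cancellation can arise, and the product equals $\sum_{i=0}^{\ell/t-1}x^{i}$. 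A final $q^c$-th power (Frobenius again, legitimate only because $t=q^c$) converts this into $\sum_{i=0}^{\ell/t-1}x^{it}$, the associate of $\tr_{\fql/\fqt}$, giving the identity exactly as polynomials, not merely modulo $x^{\ell}-1$. Without this digit/Frobenius step, or a genuine substitute for it, your proposal is an outline of the reduction rather than a proof.
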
 
\begin{proof} 
Note that the associates of $\tr_{\fqmpo/\fq}(x)$ and $\tr_{\fql/\fqt}(x)$ are $p_1(x) = \sum_{i=0}^m x^i$ and $p_2(x) = \sum_{i=0}^{\ell/t-1}x^{it}$, respectively. 
By~\cite[Lem. 3.59]{LidlNiederreiter1986}, for \eqref{eq:trace_exp}, it suffices to show that $p_1^{\lceil \frac{\ell-m}{m} \rceil}(x) = p_2(x)$.
To this end, first let $c = a \mod b$, so that $t = q^a \mod (q^b-1) = q^c$. Hence,
\[
\left\lceil \frac{\ell-m}{m} \right\rceil = \frac{\ell-t}{m} = \frac{q^c(q^{a-c}-1)}{q^b-1}
=q^c\sum_{j=0}^{\frac{a-c}{b}-1}q^{bj}.
\] 
Therefore, 
\[
\begin{split} 
&p_1^{\lceil \frac{\ell-m}{m} \rceil}(x)
= \Big(\sum_{i=0}^m x^i\Big)^{q^c\sum_{j=0}^{\frac{a-c}{b}-1}q^{bj}}\hspace{-5pt}
= \bigg(\hspace{-5pt}\prod_{j=0}^{\frac{a-c}{b}-1}\hspace{-5pt}\Big( \sum_{i=0}^m x^{iq^{bj}} \Big)\bigg)^{q^c}\\
&=\bigg( \sum_{0\leq i_0,i_1,\ldots,i_{\frac{a-c}{b}-1}\leq m}\hspace{-10pt} x^{i_0q^0}x^{i_1q^b}x^{i_2q^{2b}}\cdots x^{i_{\frac{a-c}{b}-1}q^{\left(\frac{a-c}{b}-1\right)b}}\bigg)^t\\
&= \bigg(\sum_{0\leq i_0,i_1,\ldots,i_{\frac{a-c}{b}-1}\leq m} x^{\sum_{j=0}^{\frac{a-c}{b}-1}i_j q^{jb}}\bigg)^t = \bigg(\sum_{i=0}^{\ell/t-1}x^i\bigg)^t\\
&= p_2(x),
\end{split} 
\]
where the second to last equality is due to the fact that the set
\[
\left\{
\sum_{j=0}^{\frac{a-c}{b}-1}i_j q^{jb}\colon 
0\leq i_0,i_1,\ldots,i_{\frac{a-c}{b}-1}\leq m
\right\}
\]
comprises the representations of all integers from $0$ to $\ell/t-1 = q^{a-c}-1$ in base $q^b$. The last equality follows because $t = q^c$.  
\end{proof} 

\begin{proof}[Proof of Corollary~\ref{cr:III_2}]
First, if we set $W$ to be the kernel (in $\fqmpo$) of the trace function of $\fqmpo$ over $\fq$, then by the rank-nullity theorem (see, e.g.~\cite[p. 70]{friedberg2014linear}), $\dim_{\fq}(W)=m$. Note~that as $m+1=q^b$ divides $\ell=q^a$, $W \subset \fqmpo \subseteq \fql$. Clearly, $L_W(x) = \tr_{\fqmpo/\fq}(x)$. Since $\tw$ is the same as the coefficient of $x$ in $L_W(x)$, we deduce that $\tw = 1 \in \fq$. Therefore, $W$ satisfies (P1). 

We now show that (P3) holds. 
By Lemma~\ref{lem:composition_trace}, 
\begin{equation} 
\label{eq:LWTrace}
L_W^{\otimes \lceil \frac{\ell-m}{m} \rceil}(x)
= \tr_{\fqmpo/\fq}^{\otimes \lceil \frac{\ell-m}{m} \rceil}(x) = \tr_{\fql/\fqt}(x).
\end{equation} 
Therefore, since the trace is an onto map, we have 
\[
\im(L_W^{\otimes \lceil \frac{\ell-m}{m} \rceil}) = \fqt.
\]
In order to show that $\im(L_W^{\otimes \lceil \frac{\ell-m}{m} \rceil}) \cap W$ has dimension at least~$t$, it suffices to prove that $W \supset \fqt$. Equivalently, we aim to show that $L_W(x)$ is divisible by $L_{\fqt}(x)=x^{q^t}-x$. By~\cite[Thm. 3.62]{LidlNiederreiter1986}, this holds if and only if the associate $\sum_{i=0}^m x^i$ of $L_W(x)$ is divisible by the associate $x^t-1$ of $L_{\fqt}(x)$. 

To this end, note that $t = q^c$ divides $m+1 = q^b$, and therefore, 
\[
\sum_{i=0}^m x^i = \frac{x^{m+1}-1}{x-1}
= \frac{(x^t-1)\big(\sum_{j=0}^{(m+1)/t-1}x^{tj}\big)}{x-1},
\]
which is divisible by $x^t-1$ because $\sum_{j=0}^{(m+1)/t-1}x^{tj}$ is divisible by $x-1$. The latter holds since over $\fq$ we have  
\[
\sum_{j=0}^{(m+1)/t-1}1^{tj} = (m+1)/t = q^{b-c} = 0.
\]
Thus, (P3) holds. 

As the last step, we demonstrate that $W$ also satisfies (P4).
Our goal is to show that $W \subseteq K \define \im\Big(L_W^{\otimes \lfloor \frac{\ell-m}{m} \rfloor}\Big)$. 
Note that
\[
\dim_{\fq}\hspace{-2pt}\Big(\hspace{-2pt}\ker\Big(L_W^{\otimes \lfloor \frac{\ell-m}{m} \rfloor}\Big)\hspace{-2pt}\Big)
\leq \log_q\hspace{-2pt}\Big(\hspace{-2pt}\deg\Big(L_W^{\otimes \lfloor \frac{\ell-m}{m} \rfloor}\Big)\hspace{-2pt}\Big) = \ell-t-m,
\]
where we used the fact that 
\[
\log_q\hspace{-2pt}\Big(\hspace{-2pt}\deg\Big(L_W^{\otimes \lfloor \frac{\ell-m}{m} \rfloor}\Big)\hspace{-2pt}\Big)\hspace{-2pt} =\hspace{-2pt} m\left\lfloor\hspace{-2pt} \frac{\ell-m}{m}\hspace{-2pt} \right\rfloor \hspace{-2pt}=\hspace{-2pt} m\Big(\frac{\ell-t}{m}-1\Big) \hspace{-2pt}=\hspace{-2pt} \ell-t-m. 
\]
Therefore,
\[
\begin{split}
\dim_{\fq}(K) &= \dim_{\fq}\Big(\im\Big(L_W^{\otimes \lfloor \frac{\ell-m}{m} \rfloor}\Big)\Big)\\ 
&= \ell - \dim_{\fq}\Big(\ker\Big(L_W^{\otimes \lfloor \frac{\ell-m}{m} \rfloor}\Big)\Big)\\ 
&\geq \ell - (\ell-t-m) = t+m,
\end{split}
\]

Now consider the restriction of $L_W$ on $K$. By \eqref{eq:LWTrace}, we have
\[
L_W(K) = L_W\Big(\im\big(L_W^{\otimes \lfloor \frac{\ell-m}{m} \rfloor}\big)\Big) = \im\big(L_W^{\otimes \lceil \frac{\ell-m}{m} \rceil}\big) = \fqt. 
\] 
Using the rank-nullity theorem for $L_W$ restricted on $K$, we obtain
\[
\dim(\ker_K(L_W)) = \dim(K)-\dim(L_W(K)) \geq (t+m)-t = m, 
\]
where $\ker_K(L_W)$ denotes the kernel of $L_W$ restricted to $K$. Since $W = \ker(L_W)$, we deduce that $W = \ker_K(L_W)\subseteq K = \im\Big(L_W^{\otimes \lfloor \frac{\ell-m}{m} \rfloor}\Big)$. Thus, (P4) is satisfied.  
\end{proof} 

\begin{example} 
\label{ex:construction_III}
Consider the previous example (Fig.~\ref{fig:batch}) for $q = 2$, $\ell=8=2^3, m=3=2^2-1$, and $t = \ell \mod m = 2$. 
We illustrate next the key steps of Construction~III. 

As discussed in the proof of Corollary~\ref{cr:III_2}, we set $W$ to be the kernel of the trace $\tr_{\bbF_{2^4}/\ft}$ in $\bbF_{2^4}$, that is,
\[
W = \{0,1,\bxi^{85},\bxi^{170},\bxi^{17},\bxi^{34}, \bxi^{68}, \bxi^{134}\}\subset \bbF_{2^4}, 
\]
where $\bxi$ is a primitive element of $\bbF_{2^8}$. Note that $\bxi^{85}$ and $\bxi^{17}$ are primitive elements of $\bbF_{2^2}$ and $\bbF_{2^4}$, respectively. Moreover, $L_W(x) = \tr_{\bbF_{2^4}/\ft}(x)=x^{2^3}+x^{2^2}+x^2+x$. Also, $L_W^{\otimes 2}(x) =  \tr_{\bbF_{2^8}/{\bbF_{2^2}}}(x)$ and $\im(L_W^{\otimes 2})=\bbF_{2^2}\subset W$.
Therefore, 
\[
\im(L_W^{\otimes 2}) \cap W =  \bbF_{2^2}= \{0,1,\bxi^{85}, \bxi^{170}\}. 
\]
We need to find $\{\bgai\}_{i=1}^8$ before we can determine $\{\bbi\}_{i=1}^8$. Set $\{\bgao,\bgatw\} = \{1,\bxi^{85}\}$, which is a $\ft$-linearly independent set. Next, we pick $\bgath = \bxi^{17}$ to make $\{\bgao,\bgatw,\bgath\}$ an $\ft$-basis of $W$. Then, we set $\bgaf=\bxi^{51}$, $\bgafv = \bxi^7$, $\bgas = \bxi^{13}$, $\bgasv = \bxi^3$, and $\bgae = \bxi^{53}$. It is easy to verify that the Chain Property is indeed satisfied, that is, $L_W(\bgai)=\bga_{i-3}$, for $i \in [4,8]$, or equivalently,
\[
\begin{split}
\bgao &= L_W(\bga_4) = L_W^{\otimes 2}(\bga_7),\\
\bga_2 &= L_W(\bga_5) = L_W^{\otimes 2}(\bga_8),\\
\bga_3 &= L_W(\bga_6).
\end{split}
\]
Finally, we set $\bbi = \bgai/(\balb-\bals)$ for $i \in [8]$, and use these $\bbi$'s and $L_W(x)$ for generating the checks $\{\gix\}_{i=1}^8$ and $\{\hix\}_{i=1}^8$ to repair $\fas$ and $\fab$.
\end{example} 

\section{Conclusions}
\label{sec:conclusion}

We proposed several repair schemes for a single erasure and two erasures in Reed-Solomon codes over $\fql$. Our schemes were constructed using subspace polynomials offering optimal repair bandwidths of $(n-1)(\ell-m)\log_2(q)$ bits for codes of full length $n = q^\ell$ and $r = q^m$, $1 \leq m \leq \ell$, for the case of one erasure. For two simultaneous erasures, our distributed schemes were shown to achieve the same repair bandwidth per erasure for certain ranges of parameters $q$, $\ell$, and $m$. It remains an open problem to construct repair schemes for two or more erasures that work for all possible parameter choices. Another interesting open problem is to further improve the lower bound on the repair bandwidth, which currently appear to be quite loose for $n << q^\ell$. 

\section*{Acknowledgment}

This work has been supported by the 210124 ARC DECRA grant DE180100768 and the NSF grant 1526875.

\bibliographystyle{IEEEtran}
\bibliography{FullLengthTwoErasures}

\end{document}